\documentclass[sigconf, nonacm]{acmart}

\newcommand\vldbdoi{XX.XX/XXX.XX}
\newcommand\vldbpages{XXX-XXX}
\newcommand\vldbvolume{14}
\newcommand\vldbissue{1}
\newcommand\vldbyear{2020}
\newcommand\vldbauthors{\authors}
\newcommand\vldbtitle{\shorttitle} 
\newcommand\vldbavailabilityurl{URL_TO_YOUR_ARTIFACTS}
\newcommand\vldbpagestyle{plain} 

\def\addlegendimage{\csname pgfplots@addlegendimage\endcsname}
\usepackage{multirow}
\usepackage{float}

\newcommand{\tocheck}[1]{\textcolor{black}{#1}}
\newcommand{\BF}{\textsf{LEARNT}}

\newcommand{\best}{\textsuperscript{$\star$}}
\newcommand{\second}{\textsuperscript{$\ddagger$}}

\newcommand{\revision}[1]{\textcolor{black}{#1}}

\usepackage{caption}
\usepackage{subcaption}
\usepackage{tikz}
\usepackage{pgfplots}
\usepackage{filecontents}
\usepackage{bbding}
\usepackage{threeparttable}
\usepackage{enumitem}
\usepackage{xcolor}
\usepackage{url}
\usepackage{multicol}
\usepackage{diagbox}
\usepackage[a-2b]{pdfx}

\newtheorem{obsv}{Observation}

\newtheorem{lemma}{Lemma}[section]
\usepackage{pifont}
\usepackage{dsfont}

\usepackage{lipsum}

\makeatletter
\newif\if@restonecol
\makeatother

\usepackage[linesnumbered,ruled,vlined]{algorithm2e}
\usepackage{algpseudocode}
\usepackage{amsmath}
\usepackage{booktabs}

\usepackage[colorinlistoftodos,prependcaption, textsize=small, textwidth=35]{todonotes}

\begin{document}
\title{\BF{}: A Practical Estimator for Cardinality of LIKE Queries
with Formal Accuracy Guarantees}

\author{Hai Lan}
\affiliation{%
  \institution{The University of Queensland}
  \city{Brisbane}
  \country{Australia}
  \postcode{43017-6221}
}
\email{h.lan@uq.edu.au}

\author{Zhifeng Bao}
\affiliation{%
  \institution{The University of Queensland}
  \city{Brisbane}
  \country{Australia}
}
\email{zhifeng.bao@uq.edu.au}

\author{Divesh Srivastava}
\affiliation{%
  \institution{AT\&T Chief Data Office}
  \city{Bedminster, NJ}
  \country{USA}
}
\email{divesh@research.att.com}

\author{Shixun Huang}
\affiliation{%
  \institution{University of Wollongong}
  \city{Wollongong}
  \country{Australia}
}
\email{shixunh@uow.edu.au}

\author{Yuwei Peng}
\affiliation{%
  \institution{Wuhan University}
  \city{Wuhan}
  \country{China}
}
\email{ywpeng@whu.edu.cn}

\author{Yang Yu}
\affiliation{%
  \institution{Wuhan University}
  \city{Wuhan}
  \country{China}
}
\email{yu_yang@whu.edu.cn}

\begin{abstract}
    \tocheck{We study the problem of cardinality estimation for \textsf{LIKE} queries on string data, focusing on the most common patterns in real workloads: prefix, suffix, and substring queries. 
    We propose \BF{}, a \textsf{\textbf{L}IKE} query \textbf{E}stimator with \textbf{A}ccuracy, \textbf{R}obustness, \textbf{N}egligible overhead, \textbf{T}unability, and \textbf{T}heoretical guarantees. \BF{} formulates estimation as a bucket-classification problem, and upon correct classification, it yields formal bounds on Q-error for the queries with non-empty answer. It employs a memory-efficient bucketed layered-filter architecture with Bloom filters and compact auxiliary tables, together with optimizations that exploit query skew to reduce storage. For the queries that have empty answer, \BF{} incorporates dedicated filter-based and prefix-walk strategies, providing probabilistic guarantees on correct identification. Furthermore, to support arbitrarily long query strings, we extend \BF{} with Markov modeling scheme that composes short-query statistics into estimates for longer queries. A theoretical framework guides parameter selection to minimize storage under accuracy and robustness constraints. Extensive experiments on four real-world datasets show that \BF{} consistently outperforms state-of-the-art methods such as CLIQUE and LPLM, achieving 1.3–1.7× lower mean Q-error, significantly lower tail errors, and up to 70× faster construction with comparable memory usage.}

\end{abstract}

\maketitle

\vspace{-2mm}
\pagestyle{\vldbpagestyle}
\begingroup\small\noindent\raggedright\textbf{PVLDB Reference Format:}\\
\vldbauthors. \vldbtitle. PVLDB, \vldbvolume(\vldbissue): \vldbpages, \vldbyear.\\
\href{https://doi.org/\vldbdoi}{doi:\vldbdoi}
\endgroup
\begingroup
\renewcommand\thefootnote{}\footnote{\noindent
This work is licensed under the Creative Commons BY-NC-ND 4.0 International License. Visit \url{https://creativecommons.org/licenses/by-nc-nd/4.0/} to view a copy of this license. For any use beyond those covered by this license, obtain permission by emailing \href{mailto:info@vldb.org}{info@vldb.org}. Copyright is held by the owner/author(s). Publication rights licensed to the VLDB Endowment. \\
\raggedright Proceedings of the VLDB Endowment, Vol. \vldbvolume, No. \vldbissue\ %
ISSN 2150-8097. \\
\href{https://doi.org/\vldbdoi}{doi:\vldbdoi} \\
}\addtocounter{footnote}{-1}\endgroup

\vspace{-2mm}
\ifdefempty{\vldbavailabilityurl}{}{
\vspace{.3cm}
\begingroup\small\noindent\raggedright\textbf{PVLDB Artifact Availability:}\\
The source code, data, and/or other artifacts have been made available at \url{https://github.com/DataAutonomyLab/ce4str}.
\endgroup
}


\section{Introduction}\label{sec:intro}
\tocheck{Cardinality estimation is a cornerstone of modern cost-based query optimizers~\cite{Howgood,OptSurvey}, and numerous approaches have been proposed for different data types~\cite{CE-Exp-VLDB22, CE-Exp-SIGMOD22, CE4HD, FACE, FLAT}. In this paper, we focus on cardinality estimation for \textsf{LIKE} queries. Motivated by the prevalence of prefix ($S\%$), suffix ($\%S$), and substring ($\%S\%$) patterns observed in real workloads (Table~\ref{tab:query-patterns-bench}), we concentrate on these three cases.} 

\revision{\noindent\textbf{Studied Problem.} Let a string $S$ be a sequence of characters $s_1 \ldots s_{|S|}$ drawn from a predefined vocabulary $\mathcal{V}$. A (prefix, suffix, or substring) \textsf{LIKE} query $Q$, $[\%]q_1 \ldots q_{|Q|}[\%]$, is a string where each $q_i \in \mathcal{V}$ and has at least one wildcard symbol $\%$. A string $S$ matches $Q$ if it satisfies the pattern of $Q$, denoted as ${I(S,Q)} = 1$; otherwise, ${I(S,Q)} = 0$. To this end, given a string dataset $\mathcal{S}$ and a \textsf{LIKE} query $Q$, the \textbf{cardinality estimation for $Q$ on $\mathcal{S}$} aims to estimate the number of strings in $\mathcal{S}$ that match $Q$, i.e., $\sum_{S \in \mathcal{S}}{I(S,Q)}$.}

\tocheck{Depending on whether $Q$ returns an empty result, we classify $Q$ as a \emph{non-empty-answer query} if it matches at least one string in the dataset, and as a \emph{empty-answer query} otherwise. As we will show shortly, empty-answer queries are often overlooked by previous work but very important to handle effectively.}

\begin{table}[t]
	\small
	\vspace{-2mm}
	\centering
	\caption{\textsf{LIKE} Query Pattern Frequencies in Real Benchmarks}
	\vspace{-4mm} 
	\resizebox{0.32\textwidth}{!}{%
		\begin{tabular}{|c|c|c|c|c|}
			\hline
            \textbf{Benchmark} & \textbf{$S\%$} & \textbf{$\%S$}& \textbf{$\%S\%$} & \textbf{Others} \\\hline
		JOB~\cite{JOB}&17&0&68&32\\\hline
            TPC-H~\cite{TPCH} &3&1&1&2\\\hline
            CEB~\cite{CEB}&0&0&4865&0\\\hline
            STACK~\cite{STACK}&8&1538&346&0\\\hline
            TPC-DS~\cite{TPCDS}&1&0&0&0\\\hline
            DSB~\cite{DSB}&1&0&0&0\\\hline
                		\end{tabular}%
	}
	\label{tab:query-patterns-bench}
	\vspace{-7mm}
\end{table}

\vspace{-2mm}
\subsection{Identified Research Gaps}
For a cardinality estimator to be \textbf{\emph{practical}} in a production database
system, accuracy alone is not sufficient. Since cardinality estimates are
invoked repeatedly during query optimization, an estimator must provide
reliable predictions across diverse query patterns and cardinality ranges,
avoid catastrophic errors that may mislead the optimizer, handle
empty-answer queries conservatively, and impose low construction, storage,
and inference overheads. These requirements are particularly important
for LIKE queries over string data, where query patterns, string
distributions, and workloads can vary significantly across applications.

Existing methods fall into two major categories: non-learning based approaches~\cite{LBS,MOKVI} and learning-based approaches~\cite{E2E,Astrid,LPLM,SSCard,CLIQUE} (see Sec.~\ref{sec:related_work} for details). Based on an in-depth survey, summarized in Table~\ref{tab:method-comparison}, we identify four key research gaps: 

\noindent\textit{G1: Insufficient Accuracy for Non-empty-answer Queries}. 
{{Learning-based methods have shown notable accuracy gains over traditional techniques such as PostgreSQL’s estimator~\cite{PG}. However, due to the complexity of string data distributions, these methods often struggle to generalize across datasets, query patterns, cardinality spectrum. For example, our experiments show that existing learning-based estimators yield high mean Q-errors on the DBLP-AN dataset (cf. Sec.\ref{sec:exp-acc}) and fail to handle low-cardinality and high-cardinality queries simultaneously (cf. Sec.\ref{sec:exp-basic-robust}). Such inconsistent behavior across the cardinality spectrum can lead to brittle or suboptimal query plans in cost-based optimizers.}}

\begin{table}[!ht]
	\small
	\vspace{-2mm}
	\centering
	\caption{Comparative Analysis of Representative Cardinality Estimation Methods for \tocheck{$S\%$, $\%S$, and $\%S\%$} Queries. Note: MO, LBS, and SSCard only support substring queries ($\%S\%$). 
    }
	\vspace{-4mm} 
	\resizebox{0.48\textwidth}{!}{%

            \begin{tabular}{|c|c|c|c|c|c|}
                  \hline
                  \textbf{Method}      &  \textbf{Acc. (G1)}      & \textbf{Err. Bound (G2)}  & \textbf{Empty-ans. (G3)} & \textbf{Prep. (G4)} & \textbf{Space (G4)} \\\hline
                  PostgreSQL~\cite{PG}  & $+$           & \ding{53}                   & $+$      & $+++$        & $+++$ \\\hline
                  MO~\cite{MOKVI}  & $++$           & \ding{53}                   & $+++$      & $++$        & $+$ \\\hline
                  LBS~\cite{LBS}  & $+$           & \ding{53}                   & $+++$      & $+$        & $+$ \\\hline\hline
                  E2E~\cite{E2E}         & $+$           & \ding{53}                   &      $+$   & $+$   & $+$ \\\hline
                  Astrid~\cite{Astrid}      & $++$      & \ding{53}                   &    $++$     & $+$      & $+++$ \\\hline
                  LPLM~\cite{LPLM}        & $+$           & \ding{53}                   & $++$      & $+$     & $+++$ \\\hline
                  CLIQUE~\cite{CLIQUE}      & $++$      & \ding{53}                  &   $+++$   & $+$      & $+++$ \\\hline
                  SSCard~\cite{SSCard}      & $++$      & \ding{53}                   &    $+++$     & $++$      & $++$ \\\hline\hline
                  \BF{}        & $+++$          & \checkmark                   & $+++$     & $+++$  & $+++$   \\
                  \hline
                \end{tabular}
	}
	\label{tab:method-comparison}
    \begin{flushleft}
\footnotesize
\textbf{Note}: (1) \emph{Acc.} refers to estimation accuracy. \emph{empty-ans.} refers to empty-answer queries. \emph{Prep.} indicates preprocessing time required before supporting queries. (2) More $+$ means higher accuracy or lower cost, \textbf{\checkmark}: supported, {\ding{53}}: not supported. 
\vspace{-4mm}
\end{flushleft}
\end{table}


\noindent\textit{G2: Lack of Robustness and Error Bound for Non-empty-answer Queries}. 
Existing estimators often fail to maintain accuracy across the full spectrum of cardinalities and varying query string lengths (cf. Sec.~\ref{sec:exp-basic-robust}). For instance, CLIQUE~\cite{CLIQUE} incurs large errors on high-cardinality queries, while LPLM~\cite{LPLM} underperforms in low-cardinality cases. More importantly, none of these methods offers formal error bounds -- individual estimates can be arbitrarily far from the truth.

\noindent\tocheck{\textit{G3: Unreliable Handling of Empty-answer Queries}. None of the existing approaches offers formal guarantees for correctly identifying empty-answer queries, leaving their reliability in such cases uncertain. How to properly handle them remains overlooked by existing methods and causes inconsistent accuracies: some perform poorly across test cases, while others succeed only on specific datasets. For example, in our experiments, LPLM~\cite{LPLM} identifies at most 25\% of empty-answer queries on DBLP-AN.}



\noindent\textit{G4: Excessive Memory and Computation Overheads.} 
We evaluate both \emph{memory efficiency} (i.e., memory footprint) and \emph{time efficiency}, including the preprocessing time required before supporting queries and the online inference latency. Existing estimators often incur substantial overheads in at least one of these aspects. For example, LPLM~\cite{LPLM} requires approximately 11 hours to prepare its training data, even with 30 parallel processes on the IMDB-MT dataset, and E2E~\cite{E2E} consumes more than 20 MB of memory at inference time.


\vspace{-2mm}
\subsection{\tocheck{Our Proposed Methods and Contributions}}
To address the above research gaps, we introduce \BF{}, a \textbf{L}IKE query \textbf{E}stimator with \textbf{A}ccuracy, \textbf{R}obustness, \textbf{N}egligible overhead, \textbf{T}unability, and \textbf{T}heoretical guarantees. 

\noindent\textbf{\underline{Classification-based Formulation (Sec.~\ref{sec:basic_pattern}).}} For \emph{non-empty-answer queries}, \BF{} reformulates cardinality estimation as a bucket \textbf{\emph{classification problem}}.
Instead of directly regressing the exact cardinality, we partition the cardinality space into a sequence of buckets, each defined by a lower and upper bound and associated with a representative estimate. The estimator then classifies each query into its corresponding bucket. 
This idea allows us to provide \textbf{\emph{formal guarantees on the maximum estimation error}} if we can correctly assign each query into its bucket. We parameterize the bucket boundaries with a user-specified error bound, enabling the control between estimation accuracy and memory usage.

\noindent\textbf{\underline{Bucketed Layered Filter Architecture (Sec.~\ref{sec:bucketed_filter}).}} To assign queries to their corresponding buckets, \BF{} adopts a \textbf{\emph{bucketed layered filter}} architecture (cf. Sec.~\ref{sec:layered-filter}). 
Each bucket contains a multi-layer structure of Bloom filters and a small auxiliary table. 
The Bloom filters serve as compact probabilistic tests for determining whether a query belongs to the bucket, quickly filtering out non-members, while the auxiliary table removes residual false positives. 
This layered design enables accurate bucket assignment with low storage and computation overhead. To further reduce memory consumption, we exploit the empirical skew in query distributions. We avoid constructing explicit filters for the first bucket, which captures the majority of non-empty-answer queries. Additionally, we introduce a \textbf{\emph{frontier-based optimization}} (cf. Sec.~\ref{sec:frontier}), which selectively retains only representative (``frontier'') queries during filter construction, further reducing space without sacrificing accuracy.


\noindent\textbf{\underline{Empty-answer Query Estimation (Sec.~\ref{sec:neg-est}).}} \BF{} also provides dedicated support for \textbf{\emph{empty-answer query estimation}} upon the structure built for non-empty-answer queries. We design two complementary strategies: a direct \emph{filter-based} method and a \emph{prefix-walk refinement} that leverages partial membership checks to detect queries with cardinality equal to $0$ early. Both strategies are formally analyzed, and we derive \textbf{\emph{theoretical bounds on their misclassification probabilities}}, demonstrating that \BF{} achieves conservative yet efficient estimation for empty-answer queries.

\noindent\textbf{\underline{Theoretical Parameter Selection (Sec.~\ref{sec:para_select}).}} We present a theoretical framework for \textbf{\emph{parameter selection}}. We derive expressions for the storage cost in terms of the number of layers and the false positive rates of the Bloom filters, and formulate an optimization problem under user-defined error bound and probability of misclassification of empty-answer queries to minimize memory usage. 

\begin{figure}[t]
    \centering
    \centering
    \includegraphics[width=0.5\textwidth]{ 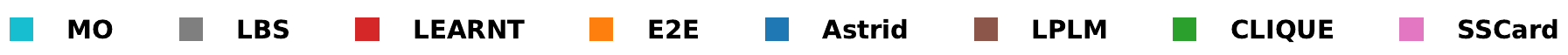}
    \subfloat{
        \includegraphics[width=0.20\textwidth,trim={9 10 10 8}]{ 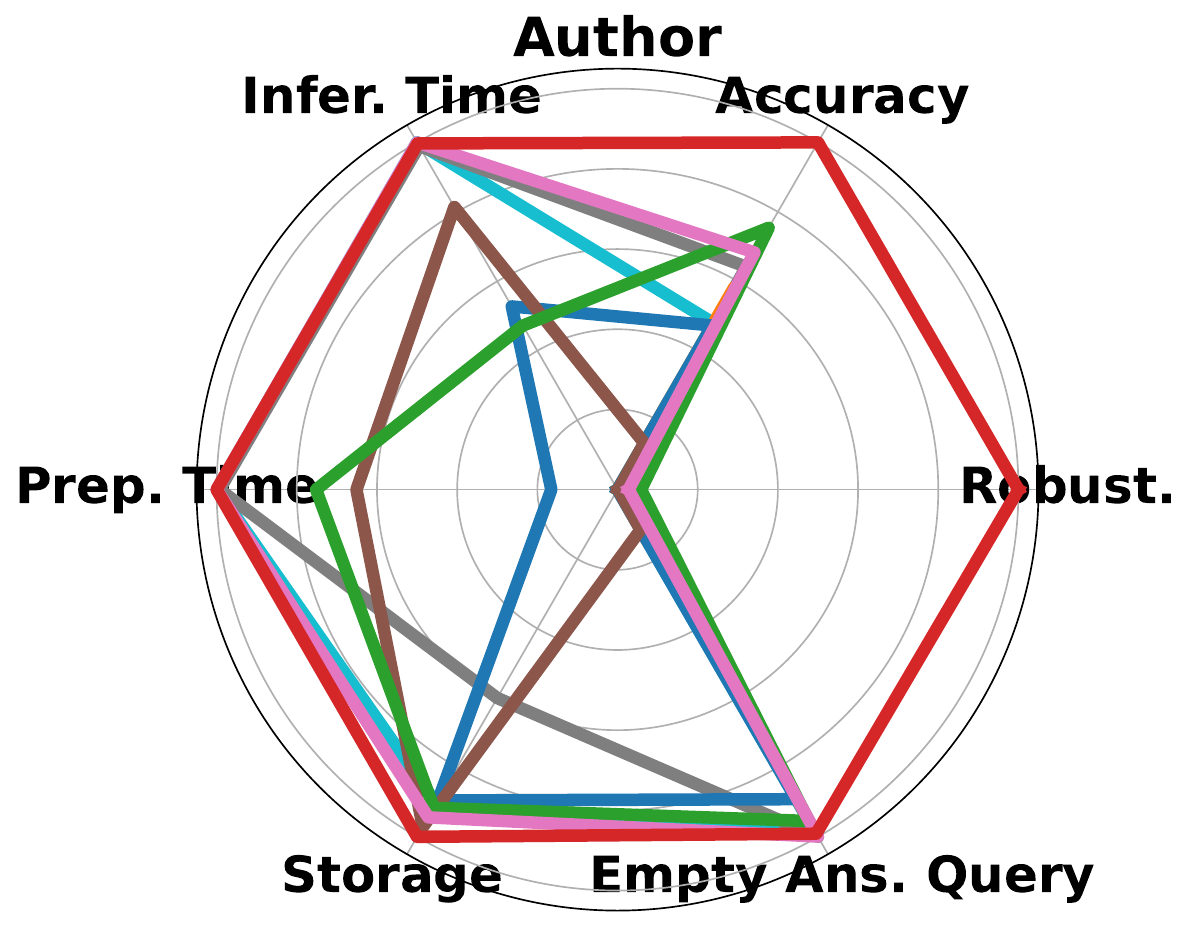}
        }
    ~
    \subfloat{
        \includegraphics[width=0.20\textwidth,trim={9 10 10 8}]{ 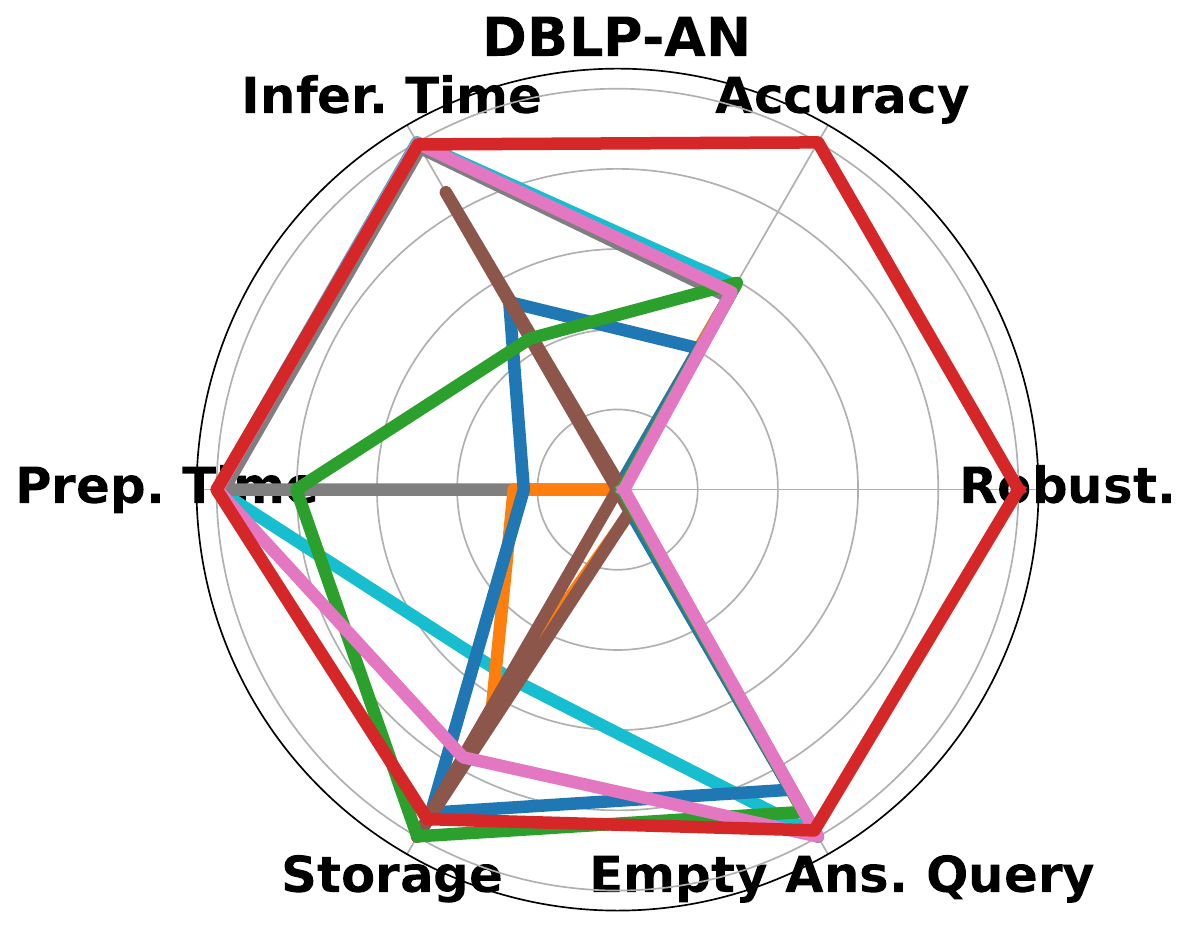}
        }
        \vspace{-1em}
    \caption{
       Comparison of Estimators on Substring Queries 
    }
    \label{fig:spider}
    \vspace{-1em}
\end{figure}

\noindent\textbf{\underline{Long Queries Support (Sec.~\ref{sec:extend_query}).}} The core LEARNT approach is designed for queries with the limited length. Here, we extend \BF{} to support long queries via Markov modeling. By modeling dependencies among consecutive short substrings, this extension reuses \BF{}’s structure to effectively estimate long queries.

\noindent\textbf{\underline{Extensive Experiments (Sec.~\ref{sec:exp}).}} We conduct extensive experiments on four real-world datasets and observe that \BF{} (1) consistently achieves 1.3–1.7× lower mean Q-error than state-of-the-art methods (e.g., CLIQUE~\cite{CLIQUE}), while also significantly reducing tail errors with competitive memory usage; (2) \BF{} requires at most one-third of the construction time of non-learning-based methods and less than 1/70 of the time required by learning-based methods. 

\noindent\textbf{Overall Takeaway.}
To provide an intuitive overview of our method’s strengths, Figure~\ref{fig:spider} compares \BF{} with state-of-the-art estimators across key evaluation dimensions: accuracy, robustness, empty-answer query identification, preprocessing time, inference time, and storage overhead.  Taken together, these results show that \BF{} is not only an accurate
estimator, but also a practical one under deployment-oriented evaluation
criteria. It addresses the key requirements of practical LIKE-query
cardinality estimation in a unified design: accurate estimates for
non-empty-answer queries, formal robustness guarantees, conservative
handling of empty-answer queries, tunable accuracy--overhead trade-offs,
and low construction, storage, and inference overheads. This balance is
important for cost-based optimizers, where cardinality estimators must be
invoked frequently and maintained efficiently as data and workloads evolve.


\vspace{-1mm}
\section{Related Work}\label{sec:related_work}
Existing cardinality estimation on strings (matching \textsf{LIKE} predicates) falls into two categories: learning-based methods and traditional non-learning based methods. 

\noindent\textbf{Learning-based Approaches}. Five recent methods --E2E~\cite{E2E}, Astrid~\cite{Astrid}, CLIQUE~\cite{CLIQUE},  LPLM~\cite{LPLM}, SSCard~\cite{SSCard} -- use learned models. E2E and Astrid are designed to support the three query patterns considered in this study. E2E generates and embeds a set of representative patterns and indexes them with a in-memory trie. For an online query, E2E identifies a stored pattern that covers the query and uses its embedding as input to the estimation model. In E2E, the in-memory trie incurs a substantial memory overhead.

Astrid~\cite{Astrid} provides two variants: \emph{Embed} and \emph{NLM}. The \emph{Embed} variant learns embeddings that queries with similar string patterns and selectivities are mapped to nearby vectors, while \emph{NLM} employs a neural language model to predict the probability of each character conditioned on the preceding characters in the query. Although \emph{Embed} achieves lower Q-error than \emph{NLM}, it incurs significantly higher training cost, as it requires processing a large number of training triplets to learn the embedding space effectively.

In addition to the three fundamental patterns, CLIQUE~\cite{CLIQUE} also supports $\%S\%S\%$, $\%S\%S$, and $S\%S\%$. CLIQUE presents an efficient algorithm to compute exact cardinalities for a given query set. It then trains a learning-based model enhanced with extended $N$-gram tables: for each query, it uses these tables to derive an upper bound\footnote{This upper bound can still be arbitrarily larger than the true cardinality and lacks theoretical guarantees.} on the true cardinality, feeds this bound (along with the query string) into the model, which predicts a coefficient $p\in[0,1]$, and outputs the final estimate as $p$ times the upper bound. 

LPLM~\cite{LPLM} generalizes to queries with any number of `\%’ and `\_’ wildcards by decomposing a query into a sequence of substrings and training a model to predict each substring’s conditional probability. It then multiplies these probabilities by the dataset size to obtain the final cardinality estimate. Despite this flexibility, LPLM underperforms on the most common patterns ($S\%$, $\%S$, $\%S\%$), and incurs high training cost because it must compute true cardinalities for all generated substrings. However, LPLM’s memory footprint is low since it only needs to store the trained model. 

SSCard~\cite{SSCard}, the latest study, is designed specifically for substring queries. It extends the FM-index to support multiple strings and organizes it using a pruned suffix tree, enabling accurate estimation for short patterns and effective compression -- especially on large, skewed alphabets. It also incorporates spline interpolation with error bounds, along with bidirectional estimation and incremental updates to balance accuracy and space efficiency.

\noindent\textbf{Traditional Approaches.} Early estimators build summary structures -- such as suffix trees or pruned suffix trees -- to predict substring selectivity. MO~\cite{MOKVI,MOKVI2, MOKVI3} uses a pruned suffix tree, LBS~\cite{LBS} adds minimal substrings with an $N$-gram table, and P-LSH~\cite{P-LSH} relies on histograms of frequent positional patterns. However, in recent studies, Astrid~\cite{Astrid} and LPLM~\cite{LPLM} show that these methods still could suffer from a large mean Q-error because they cannot capture complex dependencies among characters and substrings. 


\vspace{-2mm}
\section{Recasting Estimation as Classification
}\label{sec:basic_pattern}

The key estimation idea of our approach for \tocheck{\emph{non-empty-answer queries}} lies in transforming the cardinality estimation problem for \textsf{LIKE} queries into a classification task. Specifically, we:
\vspace{-1mm}
\begin{itemize}[leftmargin=*]
    \item \textbf{Partition the cardinality domain} into $n$ buckets \(\{B_i\}_{i=1}^n\), each defined by a lower bound $B_i.c_l$, an upper bound $B_i.c_u$, and a representative estimate $B_i.est$.
    \item \textbf{Classify $Q$} by assigning it to $B_i$ such that the cardinality of $Q$ is in the range, $[B_i.c_l,B_i.c_u]$, and returning \(B_i.est\) as the estimate. 
\end{itemize}

\revision{This classification-based formulation guarantees that, 
\emph{for any query that is correctly assigned to its true bucket} $B_i$, 
the relative estimation error is bounded by
$\max\!\left(\frac{B_i.c_u}{B_i.est}, \frac{B_i.est}{B_i.c_l}\right)$. 
Thus, correctness of bucket classification directly implies 
a formal bound on Q-error.}

\revision{To enable explicit control over the maximum estimation error, 
we introduce a user-tunable error bound parameter $eb > 1$. 
The bucket boundaries are defined as follows:
(1) $B_1.c_l = 1$;
(2) $B_i.c_l = B_{i-1}.c_u + 1$ for $i > 1$;
(3) $B_i.c_u = \lfloor B_i.c_l \cdot eb^2 \rfloor$;
(4) $B_i.est = B_i.c_l \cdot eb$.}

Under the assumption of correct bucket classification, 
the worst-case Q-error of the estimator is exactly $eb$. 
In practice, we set $1 < eb < 2$ to ensure the Q-error remains below 2, 
yielding consistently small estimation errors. 




\vspace{-2.5mm}
\subsection{Potential Solutions and Why They Fall Short}
While viewing the estimation problem as a classification problem is promising, the biggest challenge is \emph{how to reliably assign each query to its true bucket to avoid misclassifications?}

\vspace{-2.5mm}
\subsubsection{Learning-based Approaches} One natural idea is to train a \emph{learning-based classifier}, e.g., an embedding model~\cite{Astrid} or a sequential model~\cite{LPLM,Astrid} to map each query to its bucket. 
However, such models struggle to capture the character dependencies required for accurate bucket assignment.  
Moreover, the query distribution across buckets is highly skewed, making the classification problem imbalanced and training such models particularly challenging.

\begin{table}[!ht]
	\vspace{-3mm}
	\centering
	\caption{Number of Non-empty-answer Queries on  {Author},  {DBLP-AN}, and  {IMDB-MT} with $L=10$.}
	\vspace{-4mm} 
		\begin{tabular}{|c|c|c|c|}
			\hline
			\textbf{Pattern} & \textbf{ {Author}} & \textbf{ {DBLP-AN}} & \textbf{ {IMDB-MT}} \\\hline
                $S\%$ & $815,393$ & $2,145,437$ & $1,108,871$  \\\hline
                $\%S$ & $990,451$ & $1,630,169$ & $1,183,711$\\\hline
                $\%S\%$ & $5,377,686$ & $9,574,392$ & $10,159,610$ \\\hline
		\end{tabular}%
	\label{tab:total-patterns}
	\vspace{-3mm}
\end{table}

\subsubsection{Summary-based Approaches}~\label{sec:exp-bf}  
\tocheck{Ideally, a summary-based approach would index all non-empty-answer queries to provide exact bucket assignments. However, enumerating all possible non-empty-answer queries is impractical due to unmanageable construction and storage overhead. Following existing studies~\cite{LPLM,Astrid,CLIQUE}, we introduce \textbf{a maximum query length $L$}, rendering the query set finite and enumerable offline. While exact summaries, e.g., tries, hash table, are accurate, they incur prohibitive memory costs. Alternatively, we can construct a Bloom filter~\cite{BloomFilter} for each bucket, a more compact approach, but suffers from two drawbacks: 1) \emph{High Memory Footprint.} The filter size scales linearly with the number of non-empty-answer
queries. With roughly 9.56 bits per item at a 1\% false-positive rate, this
design consumes 10.9 MB on the IMDB-AN dataset’s $\%S\%$ pattern based on Table~\ref{tab:total-patterns}. 
(2) \emph{Unresolved False Positives.} False positives in Bloom filter may cause queries to match multiple buckets, resulting in ambiguous classification.}

\subsection{Overall Workflow of \BF{}}
\noindent\revision{To address the limitations of these potential solutions, we propose \BF{}, a bucketed layered filter architecture (Sec.~\ref{sec:bucketed_filter}) based on Bloom filters, to support prefix, suffix, and substring non-empty-answer queries with formal error guarantees for queries of length at most $L$. We further extend \BF{} to handle empty-answer queries (Sec.~\ref{sec:neg-est}) and queries whose length exceeds $L$ (Sec.~\ref{sec:extend_query}).}

\revision{Specifically, given a dataset $\mathcal{S}$, an error bound $e_b$, an optional probability threshold $p_n$ for correctly identifying empty-answer queries, and a query length limit $L$, for each query type:}

\noindent\revision{\textbf{Offline Construction.} \BF{} first \emph{enumerates} all non-empty-answer queries of length at most $L$  in $\mathcal{S}$ and partitions them into buckets according to $eb$, thereby determining the number of buckets and the number of queries in each bucket. It then invokes the \emph{parameter selection} method in Sec.~\ref{sec:para_select} to determine the optimal configuration (e.g., number of layers and Bloom filter false positive rates) and \emph{constructs} the estimator based on Algo.~\ref{alg::fileter_building} in Sec.~\ref{sec:bucketed_filter} accordingly.} 

\noindent\revision{\textbf{Online Estimation.} For a query $Q$, if $|Q| > L$, \BF{} applies the Markov-based extension described in Sec.~\ref{sec:extend_query}. Otherwise, it performs the prefix-walk refinement in Sec.~\ref{sec:enhence-neg}, which handles both empty-answer and non-empty-answer queries in a unified manner. Both process are based on the estimation process in Algo.~\ref{alg::query_classification_basic} in Sec.~\ref{sec:oneline-basic}.}

\vspace{-3mm}
\section{Bucketed Layered Filter: A Memory-efficient Classifier}\label{sec:bucketed_filter}

\tocheck{To ensure the formal error bounds in Sec.~\ref{sec:basic_pattern}, we require precise bucket assignments, which learning-based models struggle to guarantee on skewed distributions. We hence introduce a summary-based classifier that assigns every non-empty-answer query within length $L$ to its correct bucket. We extend support to arbitrarily long queries in Sec.~\ref{sec:extend_query}. Bloom filters offer an excellent starting
point due to their favorable space–accuracy balance, but two challenges emerge, \emph{a large
memory footprint}, and \emph{ unresolved false
positives}. We address the first challenge by exploiting the skew
in query distributions (Sec.~\ref{sec:skip-b1}), storing far fewer entries
than a naive design. To eliminate false positives, we adopt a layered
filter with a small lookup table (Sec.~\ref{sec:layered-filter}). We further cut storage by leveraging relationships among
queries (Sec.~\ref{sec:frontier}), while preserving correct classification.}

\begin{table}[!ht]
	\small
	\vspace{-2mm}
	\centering
	\caption{Percentages (\%) of Non-empty-answer Queries in Each Cardinality Range.} 
	\vspace{-4mm} 
	\resizebox{0.48\textwidth}{!}{%
		\begin{tabular}{|c|c|c|c|c|c|c|}
			\hline
			\textbf{Type} & \textbf{[1,2]} & \textbf{[3,6]} & \textbf{[7,15]} & \textbf{[16,36]} & \textbf{[37,83]} & \textbf{[84,)} \\\hline
                 {Author}-$S\%$&80.13&13.3&4.31&1.33&0.53&0.41\\\hline
                 {Author}-$\%S$&79.63&14.34&4.31&1.14&0.36&0.22\\\hline
                 {Author}-$\%S\%$&76.19&15.83&5.35&1.62&0.57&0.44\\\hline
                 {DBLP-AN}-$S\%$&72.62&17.15&6.46&2.37&0.81&0.59\\\hline
                 {DBLP-AN}-$\%S$&72.43&17.65&6.57&2.25&0.69&0.4\\\hline
                 {DBLP-AN}-$\%S\%$&69.34&18.61&7.51&2.87&0.99&0.69\\\hline
                 {IMDB-MT}-$S\%$&86.54&8.73&2.87&1.16&0.43&0.28\\\hline
                 {IMDB-MT}-$\%S$&87.39&8.19&2.72&1.04&0.41&0.25\\\hline
                 {IMDB-MT}-$\%S\%$&85.38&9.19&3.19&1.31&0.54&0.4\\\hline
                		\end{tabular}%
	}
	\label{tab:card-distribution}
\end{table}

\vspace{-4mm}
\subsection{Skipping Filter Construction Motivated by Query Skew}\label{sec:skip-b1}


A major issue with Bloom filters is their large memory cost when many non-empty-answer queries exist. Can we reduce the number of queries used in filter construction? Our empirical study on common string datasets shows that 
this is indeed feasible.
{\revision{{Table~\ref{tab:card-distribution}}}}\footnote{\revision{To produce the empirical query cardinality distributions, we enumerate all prefix, suffix, and substring with length at most $L$ for each dataset and compute their exact cardinalities by counting matching records. The resulting counts are aggregated to form the empirical distributions reported.}} shows the percentages of queries in each predefined cardinality ranges ($eb$ with $1.5$) under three datasets and different query patterns. It reveals the following key observation:

\vspace{-1mm}
\begin{obsv}\label{sec:o1}
   The query distribution (the number of queries in each cardinality range) across datasets and query patterns is highly skewed, with the majority of queries exhibiting low cardinality.
\end{obsv}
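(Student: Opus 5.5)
Observation~\ref{sec:o1} is an empirical claim, so my plan has two parts. The first is to reproduce the measurement behind Table~\ref{tab:card-distribution}. The second is to add a counting argument explaining why this skew is structural rather than specific to these datasets, since the later design (no filter for $B_1$) relies on it.

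\emph{Empirical part.} For each dataset and each pattern ($S\%$, $\%S$, $\%S\%$), I would enumerate every distinct prefix, suffix, or substring of length at most $L$ and compute its exact cardinality $c(Q)=\sum_{S\in\mathcal{S}} I(S,Q)$. This can be done in one pass per string: insert its distinct length-$\le L$ substrings into a hash map, deduplicating within the string. I would then bin the counts by the bucket ranges induced by $eb=1.5$ under rules (1)--(4), i.e.\ $[1,2],[3,6],[7,15],\dots$. The observation is confirmed if, in every row, the first bucket holds a clear majority and the mass decays monotonically across buckets, which is what Table~\ref{tab:card-distribution} reports.

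\emph{Structural part.} Let $m$ be the maximum string length and $\mathcal{Q}$ the set of non-empty-answer queries of one pattern type. Each string contributes to at most $mL$ distinct substring queries (at most $L$ prefix or suffix queries), so
\[
\sum_{Q\in\mathcal{Q}} c(Q)\;\le\; T:=|\mathcal{S}|\,mL .
\]
A Markov-style counting bound then gives $|\{Q\in\mathcal{Q}: c(Q)\ge c\}|\le T/c$. Because the lower bucket boundaries grow roughly geometrically, by a factor of about $eb^2$, the population of bucket $B_i$ is at most about $T/B_i.c_l$, which decays exponentially in $i$.

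The second ingredient is anti-monotonicity: if $Q'$ extends $Q$, then $c(Q')\le c(Q)$. Grouping queries by length $\ell$ gives at most $|\mathcal{S}|m$ occurrences per length. The number of distinct length-$\ell$ queries, however, grows with $\ell$ until it saturates near the number of occurrences. Hence for $\ell$ close to $L$ the average cardinality $|\mathcal{S}|m/|\mathcal{Q}_\ell|$ approaches $1$, and since the lengths near $L$ dominate $|\mathcal{Q}|$, most queries land in $B_1$.

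\emph{Main obstacle.} The counting bound alone does not guarantee that $B_1$ holds a majority. It only bounds the upper tail, and a pathological low-entropy corpus (e.g.\ many identical strings) could push everything into one high bucket. I would therefore state the structural argument conditionally, under an assumption that the number of distinct length-$\ell$ substrings is a constant fraction of the occurrences for large $\ell$. Natural-text datasets satisfy this, and I would check it empirically alongside the table. The quantitative majority claim itself would remain supported by the measurements.
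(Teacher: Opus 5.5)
Your empirical part is exactly how the paper supports Observation~\ref{sec:o1}. The footnote to Table~\ref{tab:card-distribution} describes enumerating all prefix, suffix, and substring queries of length at most $L$, computing their exact cardinalities, and binning them with $eb=1.5$ into $[1,2],[3,6],[7,15],\dots$; in every row the first range holds roughly 69--87\% of the queries. The structural counting argument is an extra that the paper does not attempt, and you are right that it only bounds the tail and cannot certify the $B_1$ majority, so the claim rests on the measurement, just as in the paper.
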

\vspace{-1mm}

This skewed distribution simplifies filter construction. Since most
queries lie in $[1, 2]$, i.e., $B_1$, building a layered filter for $B_1$ is
unnecessary. We can instead build filters only for buckets $B_i$ ($i > 1$). During
classification, if a query matches none of these filters, we assign it
to $B_1$ with estimate $eb$. This greatly reduces memory while preserving
classification accuracy.

\begin{figure}[t]
	\centering
	\includegraphics[width=0.5\textwidth]{ 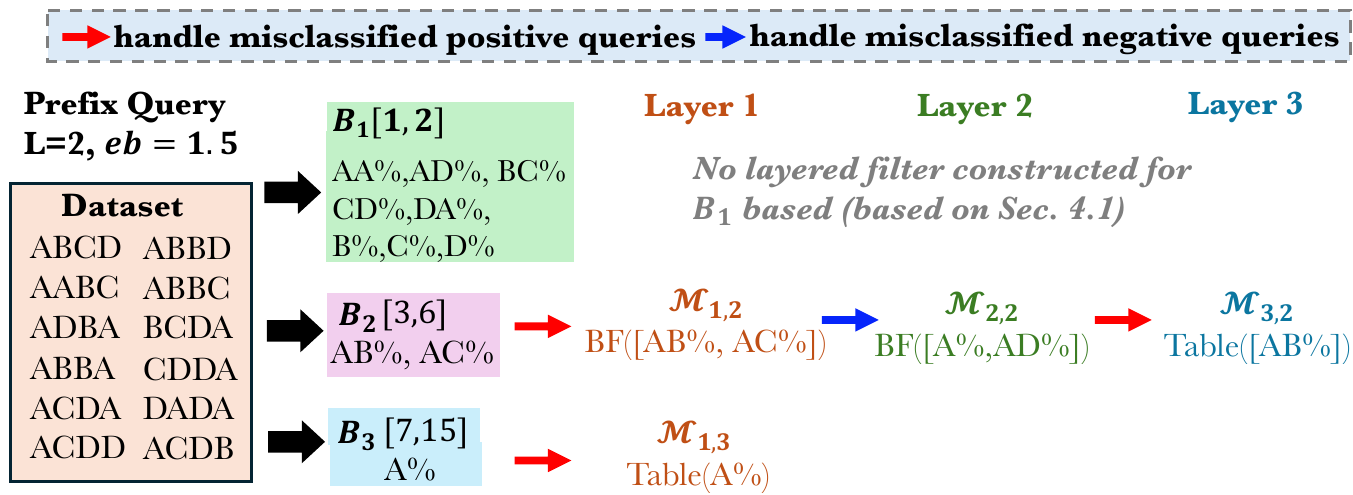}
	\vspace{-6mm}
	\caption{\revision{Example of Bucketed Layered Filter}}
	\label{fg:building-ov}
	\vspace{-4mm}
\end{figure}

\subsection{A Layered Filter for Each Bucket}\label{sec:layered-filter}
\subsubsection{\tocheck{Idea of A Layered Filter}} 

\revision{Figure~\ref{fg:building-ov} illustrates the bucketed layered filter structure and construction using a concrete prefix-query example with $L=2$ and $eb=1.5$. The enumerated queries are partitioned into three buckets, $B_1$, $B_2$, and $B_3$. 
The first bucket $B_1$ is skipped based on Sec.~\ref{sec:skip-b1} and we focus on building the filter for $B_2$ here. 
The goal is to ensure that all queries in $B_2$ return \textsf{true}, while queries from other buckets return \textsf{false}.}

\noindent\revision{\textbf{Layer 1.} We first build a Bloom filter on the positive set $B_2.K = \{\text{AB\%, AC\%}\}$ and store it in $\mathcal{M}_{1,2}$ ($\mathcal{M}_{j,i}$ stores the data structure of $j$-layer of $B_i$). This guarantees that all $B_2$ queries pass Layer~1.}

\noindent\revision{\textbf{Layer 2.}  However, false positives in Bloom filter may allow some queries from $B_1$ or $B_3$ to pass $\mathcal{M}_{1,2}$ as well. We evaluate all negative queries (i.e., queries from $B_1$ and $B_3$)
against $\mathcal{M}_{1,2}$ and collect those misclassified as positive. They are $A\%$ and $AD\%$ in the example.  We then build a second-layer structure to capture these misclassified negative queries. Here we build another bloom filter stored in $\mathcal{M}_{2,2}$.}

\noindent\revision{\textbf{Layer 3.} Some queries from $B_2$ may be mistakenly rejected (false negatives) as they may be incorrectly classified as positive in $\mathcal{M}_{2,2}$, which is built on the queries not in $B_2$. In the example, $AB\%$ is classified in $\mathcal{M}_{2,2}$.  
To correct this, we add a third layer that records $AB\%$ by building up a lookup table stored in $\mathcal{M}_{3,2}$.}

\noindent\revision{\textbf{Query Example.} Consider the query AB\%, which belongs to bucket $B_2$. It first passes Layer~1 ($\mathcal{M}_{1,2}$), which is built on $B_2$’s positive keys. However, it incurs a false positive at Layer~2 ($\mathcal{M}_{2,2}$), which is constructed to capture misclassified negative queries. This would incorrectly reject $AB\%$. The final lookup table ($\mathcal{M}_{3,2}$) corrects this by explicitly storing such remaining positive queries, ensuring AB\% is correctly classified into $B_2$.}

\noindent\emph{\underline{Discussion.}} The above example consists of three layers in $B_2$, but the process can be extended with additional alternating Bloom filters. \revision{In general, odd-numbered layers are built on the currently misclassified positive queries, while even-numbered layers are built on the currently misclassified negative queries. The final layer is always implemented as a deterministic lookup table. This design is crucial: although intermediate Bloom filters may introduce false positives, the final lookup table explicitly corrects any remaining misclassifications, ensuring exact bucket assignment. Consequently, the error bound derived in Sec.~\ref{sec:basic_pattern} is guaranteed to hold.}

In Sec.~\ref{sec:filter-build}, we formalize the layered architecture for a general bucket $B_i$ and present its systematic construction algorithm. The number of Bloom-filter layers is determined in Sec.~\ref{sec:para_select}, where we develop a principled parameter selection strategy. The formal online query process is described in Sec.~\ref{sec:oneline-basic}.

\setlength{\textfloatsep}{0pt}
\begin{algorithm}[t]
	\small
	\caption{Bucketed Layered Filter Building Process}
	\label{alg::fileter_building}
	\KwIn{bucket set ${B_i}$ ($i\in[1,n]$), number of layers $m$, false positive rates $f_j$ ($j\in[1,m-1]$)}  
	\KwOut{Bucketed Layered Filter $\mathcal{M}$}  

        $\mathcal{M} \gets []$\;
        \ForEach{$i \in [2, 3, ..., n]$}{
            $\mathcal{Q}_{neg} \gets []$\;
            \ForEach{$j \in [1, 2, ..., n]$}{
                \If{$j > i$ or $j = 1$} {
                    $\mathcal{Q}_{neg}.add(B_j.K)$\;
                }
            }
            $\mathcal{Q}_{pos} \gets B_i.K$\;
            \ForEach{$j \in [1, 2, ..., m-1]$}
            {
                $\mathcal{Q}_{mis} \gets []$\;
                \If{$j \text{ mod } 2=1$}{
                    $\mathcal{M}_{j,i} \gets $ BuildBloomFilter($\mathcal{Q}_{pos},f_j$)\;
                    \ForEach{$Q \in \mathcal{Q}_{neg}$} {
                        \If{$Q \in \mathcal{M}_{j,i}$}{
                            $\mathcal{Q}_{mis}.add(Q)$\;
                        }
                    }
                    $\mathcal{Q}_{neg} \gets \mathcal{Q}_{mis}$\;
                }\Else{
                    $\mathcal{M}_{j,i} \gets $ BuildBloomFilter($\mathcal{Q}_{neg},f_j$)\;
                    \ForEach{$Q \in \mathcal{Q}_{pos}$} {
                        \If{$Q \in \mathcal{M}_{j,i}$}{
                            $\mathcal{Q}_{mis}.add(Q)$\;
                        } 
                    }
                    $\mathcal{Q}_{pos} \gets \mathcal{Q}_{mis}$\;
                }
            
            }
            \If{$m \text{ mod } 2 = 1$}{
                $\mathcal{M}_{m,i} \gets $ BuildTable($\mathcal{Q}_{pos}$)\;
            }\Else{
                $\mathcal{M}_{m,i} \gets $ BuildTable($\mathcal{Q}_{neg}$)\;
            }
        }
        return $\mathcal{M}$
\end{algorithm}

\subsubsection{Bucketed Layered Filter Building}~\label{sec:filter-build}
Algo.~\ref{alg::fileter_building} presents the  process of building the bucketed layered filter. Suppose the two-dimensional array, $\mathcal{M}$ with $m$ rows and $n$ columns, is used to store the bucketed layered filter. For each bucket $B_i$, we build a layered filter with $m$ ($m \geq 2$) layers. $\mathcal{M}_{j,i}$ refers to the summary (either a Bloom filter or a lookup table) of bucket $B_i$ at the $j$-th layer. Let $f_j$ denote the false positive ratio of $j$-th layer.\footnote{For simplicity, we assume that the same layer of different buckets use the same false positive ratio. This assumption is lifted in our parameter selection process.} For each $B_i$ ($i >  1$), we build the layered filter with following steps:

\smallskip\noindent\underline{\textit{Step 1: Collect Negative and Positive Queries}}. To avoid classifying the queries not in $B_i$ with $B_i$, we collect queries from other buckets and check their classification during our building process. We call them as \emph{negative} queries for $B_i$. 
Based on Observation~\ref{sec:o1}, we design an online classification process that starts from $B_2$, i.e., from the buckets with a large number of non-empty-answer queries. Hence, the queries in $B_j$ ($j < i \text{ and } j \ne 1$) will never be misclassified in the layered filter of $B_i$. We only collect queries from $B_1$ and $B_j$ $(j>i)$ and store them in $\mathcal{Q}_{neq}$ (Lines 2-6) and we initialize $\mathcal{Q}_{pos}$ with $B_i.K$ (Line 7). In what follows, $\mathcal{Q}_{neq}$ and $\mathcal{Q}_{pos}$ store the currently misclassified negative queries and the positive queries, respectively.

\smallskip\noindent\underline{\textit{Step 2: Build Bloom Filters}}. Next, we proceed to build $m-1$ Bloom filters. The filter at each odd-numbered layer is built using the currently misclassified positive queries, i.e., $\mathcal{Q}_{pos}$ (Line 11) while the filter at each even-numbered layer is built using the currently misclassified negative queries, i.e., $\mathcal{Q}_{neq}$ (Line 17). 

After building the filter at each layer, we evaluate the effectiveness of the filter by checking which queries from the opposing class, $\mathcal{Q}_{neq}$ or $\mathcal{Q}_{pos}$, incorrectly pass through the filter, i.e., the currently misclassified queries (Lines 12-14 and Lines 18-20). They are stored in $\mathcal{Q}_{mis}$, which becomes the new $\mathcal{Q}_{neq}$ or $\mathcal{Q}_{pos}$ for the next layer.

\smallskip\noindent\underline{\textit{Step 3: Construct the Lookup Table}}. The final layer is a built lookup table rather than a Bloom filter to ensure precise membership decisions. If $m$ is odd, the table is constructed from the remaining $\mathcal{Q}_{pos}$; otherwise, it is built from $\mathcal{Q}_{neg}$.


\begin{algorithm}[t]
	\small
	\caption{Query Classification}
	\label{alg::query_classification_basic}
	\KwIn{query $Q$, a bucketed layered filer $\mathcal{M}$, bucket number $n$, layer number $m$} 
	\KwOut{bucket ID that $Q$ belongs to}  
        \ForEach{$i \in [2, 3, ..., n]$}{
                \ForEach{$j \in [1, 2, ..., m-1]$}{
                    \If{$Q \notin \mathcal{M}_{j,i}$} {
                        \If{$j$ mod $2=1$}{
                        break; \Comment{$Q$ is an invalid query for $B_i$.}
                        }\Else{
                            \Return $i$\;
                        }
                    }
                }

                \If{$Q \in \mathcal{M}_{m,i} \text{ XOR } (m \text{ mod }2=1)$} {
                    \Return $i$\;
                } 
                
        }
        \Return $1$\;
        
\end{algorithm}

\subsubsection{Online Classification Process}~\label{sec:oneline-basic}
The pseudo-code is presented in Algo.~\ref{alg::query_classification_basic}. Given a query $Q$, we sequentially check whether $Q$ belongs to a bucket $B_i$ in ascending order, starting from $B_2$, followed by $B_3, B_4$, and so on. For each $B_i$, we check $Q$ layer by layer. Since the odd-numbered layers and even-numbered layers are built upon positive queries and negative queries, respectively, if $Q$ is not in an odd-numbered layer, it must be a negative query and we go to check next bucket. Similarly, it $Q$ is not in an even-numbered layer, it must be a positive query and we return $i$. If $Q$ is filtered out from all buckets $B_i$ for $i \ge 2$, the algorithm returns 1.

\subsection{Frontier-based Pruning}\label{sec:frontier} 
Recall that to guarantee correct bucket ID predictions for all queries, the construction of the \emph{second-layer} Bloom filter in each bucket $B_i$ ($i \geq 2$) requires checking for potential false matches against the queries in bucket $B_1$ (Lines 5-6 in Algo.~\ref{alg::fileter_building}). Because $B_1$ can be large, false positives at the first-layer Bloom filter may cause a substantial number of its queries to be misclassified. This enlarges the candidate set for the second layer, requiring it to be built over a much larger query set and thus incurring higher memory cost.

\smallskip\noindent\textbf{Construction Process}. We propose a \textbf{frontier-based} pruning strategy that selectively uses a subset of $B_1$’s queries when constructing the Bloom filter for each $B_i$. Specifically, we define frontier queries ($B_1.FQ$) as those that are not covered by any other query in $B_1$. Formally, a query $Q \in B_1.K$ is a frontier query if there does not exist another $Q^\prime \in B_1.K$ such that $Q’$ is a prefix of $Q$. \revision{For example, $B_1.K$ in Figure~\ref{fg:building-ov} is $\{B\%, C\%, D\%, AA\%, AD\%, BC\%, CD\%, DA\%\}$, then $B_1.FQ = \{B\%, C\%, D\%, AA\%, AD\%\}$. In Algo.~\ref{alg::fileter_building},  this pruning is applied by replacing $B_1.K$ with $B_1.FQ$ in Line 6 when $j = 1$, i.e., reducing the number of queries in $B_1$ being checked.}

\smallskip\noindent\textbf{Online Classification}. Using only frontier queries introduces a challenge during prediction. If a query $Q \in B_1$ is not a frontier query, it could be misclassified, since such a query is not explicitly considered during the filter construction process. To mitigate this, we refine the online prediction process. Instead of evaluating only $Q$ itself (as in Sec.~\ref{sec:oneline-basic}), we perform a \textbf{prefix walk}, which iteratively checks the bucket IDs associated with $Q$ and its prefixes. If any prefix of $Q$ is classified into $B_1$, we assign $Q$ to $B_1$. Otherwise, we select the bucket with the smallest predicted ID.

\revision{In Figure~\ref{fg:building-ov}, consider a query $Q = BC\%$. By checking each prefix of it, $\{BC\%, B\%\}$, based on Algo.~\ref{alg::query_classification_basic}, we find that $B\%$ maps to $B_1$. Thus, $Q$ is predicted to belong to $B_1$.}

\section{Classifying Empty-answer Queries with Theoretical Guarantees}\label{sec:neg-est}
We now describe how our estimator handles \emph{empty-answer} queries. 

\noindent\revision{\textbf{{Rationale for Mapping Empty-answer Queries to $B_1$.}} Accurately identifying that a query has a cardinality of zero is
intrinsically difficult. The space of potential empty-answer queries
is substantially larger than that of non-empty-answer queries, and
complete detection would require enumerating a prohibitively large set of patterns, which is
infeasible. Moreover, practical query optimizers, e.g. PostgreSQL, rarely predict a cardinality
of zero unless emptiness can be proven (e.g., contradictory
predicates). Assigning zero without certainty may prematurely
eliminate execution paths or suppress necessary joins, leading to
severely suboptimal plans.}

\revision{In our framework, $B_1$ represents the smallest non-zero
cardinality range (e.g., $[1,2]$ when $eb = 1.5$). Mapping an
empty-answer query to $B_1$ therefore introduces only a minimal,
bounded overestimate. Compared to assigning a larger bucket,
this conservative choice limits the impact on plan generation
while preserving the lightweight design of the estimator.}

Motivated by this, we assign empty-answer queries to bucket~$B_1$ and develop two strategies: a naive filter-based method (Sec.\ref{sec:neg_naive}) and a prefix-walk refinement (Sec.\ref{sec:enhence-neg}), both of which integrate naturally with our layered filter framework and provide formal guarantees.


\subsection{Filter-Based Classification}\label{sec:neg_naive}

Let $Q_{\text{zero}}$ denote an empty-answer query. Assume we have $n$ buckets, $B_1, \dots, B_n$, where each $B_i$ ($i \ge 2$) is equipped with a layered filter of $m$ layers. Each layer $j$ has a false positive rate $f_j$, and the final layer is a lookup table. During classification, we sequentially probe the buckets from $B_2$ to $B_n$, as Sec.~\ref{sec:oneline-basic}. \revision{{Under the assumption of independence between false-positive events across layers, 
which follows from using independent hash functions for each Bloom filter layer.}}\footnote{\revision{Each Bloom filter layer in each bucket is instantiated using an independent universal hash function drawn from a hash family with pairwise independence. As a result, membership tests across layers and buckets correspond to independent Bernoulli trials.}} \revision{We derive the probability that an empty-answer query is assigned to each bucket, formalized in Lemma~\ref{lem:neg-misclass}.}

\begin{lemma}[Classification Probability of Empty-answer Queries into \texorpdfstring{$B_i$}{Bi}]\label{lem:neg-misclass}
Under the assumption of independent false-positive events, the probability $p_i$ of $Q_{\text{zero}}$ assigned to bucket \(B_i\) is
\[
  p_i
  = \sum_{k=1}^{\lfloor\frac{m-1}{2}\rfloor}\prod_{j=1}^{2k-1} f_j\cdot(1-f_{2k})  + \mathds{1}_{\{m \text{ mod }2=0\}}\cdot\prod_{j=1}^{m-1}f_j
  \quad i\ge2,
\]
and the fall‐through to \(B_1\) is 
$
  \Pr[Q_{\text{zero}}\to B_1]
  = \prod_{i=2}^{n}(1-p_i)
$.
\end{lemma}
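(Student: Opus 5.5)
We trace the execution of Algo.~\ref{alg::query_classification_basic} on $Q_{\text{zero}}$ inside a single bucket $B_i$ ($i\ge2$). The key structural fact is that $Q_{\text{zero}}$ belongs to no key set used in construction: it is not in $B_i.K$, not in any $\mathcal{Q}_{neg}$, and in particular not stored in the final lookup table $\mathcal{M}_{m,i}$. Hence every Bloom-filter membership test on $Q_{\text{zero}}$ succeeds only through a false positive. By the stated independence assumption (independent hash functions per layer), the event ``$Q_{\text{zero}}\in\mathcal{M}_{j,i}$'' is a Bernoulli trial with success probability $f_j$, independent across $j$. The table lookup at layer $m$ is deterministic and returns false.

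Next we enumerate the mutually exclusive execution paths that end with ``return $i$''. The inner loop stops at the first layer $j\le m-1$ whose test fails. If that layer is odd, the algorithm breaks and moves on, so $B_i$ is not returned. If that layer is even, say $j=2k$, the algorithm returns $i$. Such an even layer must satisfy $2k\le m-1$, that is $k\le\lfloor\frac{m-1}{2}\rfloor$. The path probability is $\prod_{j=1}^{2k-1}f_j\cdot(1-f_{2k})$, since layers $1,\dots,2k-1$ must pass and layer $2k$ must fail. If all $m-1$ filters pass, which has probability $\prod_{j=1}^{m-1}f_j$, the final check runs. Because $Q_{\text{zero}}\notin\mathcal{M}_{m,i}$, the condition ``$Q\in\mathcal{M}_{m,i}$ XOR $(m\bmod 2=1)$'' reduces to ``$m$ is even''. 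This yields the indicator term. Summing over these disjoint paths gives $p_i$. We should check the boundary cases separately: for $m=2$ the sum is empty and only the indicator term survives, and the parity of $m-1$ must not admit an extra even filter layer.

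For the fall-through to $B_1$, the algorithm returns $1$ exactly when no bucket $B_2,\dots,B_n$ returns. Buckets use independent hash functions, so the per-bucket non-return events are independent, each with probability $1-p_i$, and their product gives the claim.

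The main obstacle is justifying independence, since Bloom filter false-positive rates are only approximately $f_j$ and are not exactly independent events for a fixed query. We will rely on the paper's stated modeling assumption rather than prove it. A second subtlety is that the per-bucket return probability is unconditional. Reaching $B_i$ requires the earlier buckets not to return, but that only affects $\Pr[Q_{\text{zero}}\to B_i]$ overall, not the per-bucket quantity $p_i$ as the lemma intends it. We will state this interpretation explicitly.
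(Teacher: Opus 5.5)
Your overall route matches the paper's proof. You enumerate the disjoint exit paths inside one bucket: either the first failing filter is at an even layer $2k\le m-1$, or the query survives all $m-1$ filters and is accepted at the last layer. You then multiply the independent per-bucket non-return probabilities to get the fall-through to $B_1$. Your sum, indexed from $k=1$ and empty when $m=2$, is handled correctly.

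\textbf{The gap is the final-layer step.} Since $Q_{\text{zero}}\notin\mathcal{M}_{m,i}$, the printed test in Algo.~\ref{alg::query_classification_basic} becomes \textsf{false} XOR $(m \bmod 2=1)$. That is true exactly when $m$ is \emph{odd}, not even. Taken literally, your argument therefore yields the indicator $\mathds{1}_{\{m \bmod 2=1\}}$ and contradicts the lemma; you reach the stated term only through a Boolean slip.

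The justification has to come from what the last layer stores (Algo.~\ref{alg::fileter_building}, Step 3).
\begin{itemize}
\item If $m$ is even, layer $m-1$ is odd and built on $\mathcal{Q}_{pos}$. The survivors of all $m-1$ filters are the remaining positives plus the residual negatives, and the table is built on exactly those residual negatives. A survivor absent from the table is therefore accepted. $Q_{\text{zero}}$ is always absent, so it is returned with probability $\prod_{j=1}^{m-1}f_j$.
\item If $m$ is odd, the table holds the residual positives, so $Q_{\text{zero}}$ is rejected.
\end{itemize}

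In other words, the condition printed in Algo.~\ref{alg::query_classification_basic} is inverted; the consistent test is $Q\in\mathcal{M}_{m,i}$ XOR $(m\bmod 2=0)$. The paper's own example shows this: AB\% lies in $\mathcal{M}_{3,2}$ with $m=3$ and must be returned, yet the printed test gives \textsf{true} XOR \textsf{true} $=$ \textsf{false}.

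The paper's proof simply asserts the even-$m$ case. You should derive it from the table's contents and explicitly flag the inconsistency, rather than evaluate the printed XOR.
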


\begin{proof}
There are two main cases that $Q_{\text{zero}}$ will be classified into $B_i$: (1) it is not in any odd-numbered layer -- the probability is $\sum_{k=2}^{\lfloor\frac{m-1}{2}\rfloor}\prod_{j=1}^{2k-1} f_j\cdot(1-f_{2k})$; (2) if the last Bloom filter is at an even-numbered layer, i.e. $m \text{ mod }2=0$, and $Q_{\text{zero}}$ is in that filter -- the probability is $\prod_{j=1}^{m-1}f_j$. Thus, the probability that $Q_{\text{zero}}$ will be classified in $B_i$ is the sum of them. 

For each \(i\ge2\), let \(E_i\) be the event “$Q_{zero}$ is classified in $B_i$”. So \(\Pr[E_i]=p_i\). By sequential probing and independence,
\[
  \Pr[Q_{\text{zero}}\to B_i]
  = \Pr\bigl(\neg E_2\wedge\cdots\wedge\neg E_{i-1}\bigr)\,\Pr[E_i]
  = \Bigl(\prod_{j=2}^{i-1}(1-p_j)\Bigr)\,p_i,
\]
and if we end up in \(B_1\), giving the product over \(j=2..n\).  
\qedhere
\end{proof}


To simplify analysis, we assume all layers in $B_i$ share the same false positive rate $f$. The misclassification probability becomes:
\begin{equation}\label{eq:neg_p_pw}
p_i = 
\begin{cases}
\frac{f-f^m}{1+f} & \text{if } m \text{ mod }2= 1 \\
\frac{f+f^m}{1+f} & \text{if } m \text{ mod }2= 0
\end{cases}
\end{equation}

\subsection{Prefix-Walk Refinement}\label{sec:enhence-neg}
To further improve robustness, we introduce a \emph{prefix-walk refinement} strategy that examines multiple prefixes of the query. This approach increases the likelihood of correctly assigning $Q_{\text{zero}}$ to $B_1$ by detecting inconsistencies across prefix classifications.

\revision{Let $l$ be the length of $Q_{\text{zero}}$. We define a sequence of $t = l - l_1 + 1$ prefixes:
$
Q_0 = Q_{\text{zero}}, Q_1 = Q_{\text{zero}}[:-1],  \dots, Q_{t-1} = Q_{\text{zero}}[:l_1], $
where $Q_{\text{zero}}[:l_1]$ is the longest known non-empty-answer prefix. Each prefix $Q_k$ is classified into a bucket ID $b_k \in \{1, \dots, n\}$ using Algo.~\ref{alg::query_classification_basic}. Since $Q_i$ is a prefix of $Q_{i-1}$,
its true cardinality must be no smaller than that of $Q_{i-1}$, i.e., $b_i \ge b_{i-1}$, and $b_0$ must be the smallest among all $b_i$. Based on this \emph{monotonicity property}, we determine the bucket assignment:
\begin{itemize}[leftmargin=*]
  \item If any $b_k = 1$, assign $Q_{\text{zero}}$ to $B_1$. Since our estimator never misclassifies non-empty-answer queries, the $b_i = 1$ guarantees that $Q$ must belong to $B_1$ based on the \emph{monotonicity property}.
  \item If there exists an index $i$ such that $b_i < b_{i-1}$, we assign the query to $B_1$. $b_i < b_{i-1}$ violates \emph{monotonicity property}, i.e., at least $Q_{i-1}$ is assigned to wrong buckets. Since non-empty-answer queries are never misclassified, this implies that $Q_{i-1}$ is empty-answer; hence $Q_{{zero}}$ must be assigned to $B_1$.
  \item Otherwise, assign to $b_0$, the smallest bucket id.
\end{itemize}
}
This rule enhances robustness by exploiting prefix-based signals indicative of invalidity. \revision{{In the following, we present a probabilistic analysis of correctly classifying an empty-answer query under the assumption that each prefix $Q_k$ is classified independently.}}\footnote{\revision{Although prefix queries are structurally related, this assumption holds because each prefix is mapped to hash values using independent hash functions. Even a single-character change between successive prefixes (e.g., from $Q_i$ to $Q_{i-1}$) results in an uncorrelated hash output under universal hashing. Consequently, membership tests for different prefixes are independent trials with respect to Bloom filter false positives.}}

\begin{lemma}[Prefix-Walk Classification Probability of Empty-answer Queries into \texorpdfstring{$B_1$}{B1}]
\label{lem:prefix-negative}
Assume each prefix $Q_k$ is classified independently, and the probability of $b_k=1$ is $q$.
Let $t$ be the number of prefixes. Then the probability of $Q_{zero}$ classified into $B_1$ is
\[
\Pr[Q_{\text{zero}}\to B_1] = 1 - (1 - q)^t \cdot \frac{\binom{n + t - 2}{t}}{(n - 1)^t}.
\]
\end{lemma}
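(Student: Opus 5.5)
We compute the complementary event: $Q_{\text{zero}}$ is \emph{not} sent to $B_1$ exactly when every prefix avoids bucket $1$ and the bucket sequence is consistent with monotonicity. Concretely, by the refinement rule, $Q_{\text{zero}}\not\to B_1$ iff (i) $b_k\neq 1$ for all $k\in\{0,\dots,t-1\}$, and (ii) $b_0\le b_1\le\cdots\le b_{t-1}$. So the plan is to write
\[
\Pr[Q_{\text{zero}}\not\to B_1] = \Pr[\forall k:\ b_k\ne 1]\cdot \Pr\bigl[b_0\le\cdots\le b_{t-1}\ \big|\ \forall k:\ b_k\ne 1\bigr],
\]
and evaluate the two factors separately. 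The desired statement then follows by taking the complement.

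The first factor is immediate from the independence assumption on prefixes: each $b_k$ equals $1$ with probability $q$, so $\Pr[\forall k:\ b_k\neq 1]=(1-q)^t$. For the second factor, conditioned on $b_k\ne1$, each $b_k$ takes a value in $\{2,\dots,n\}$, a set of size $n-1$. Here I would adopt the modelling assumption implicit in the formula: conditioned on not landing in $B_1$, the bucket is uniform over these $n-1$ values, independently across prefixes. I would argue this at the level of the paper's symmetric analysis, treating all buckets as having the same per-bucket false-positive behaviour as in Eq.~(\ref{eq:neg_p_pw}). Then the conditional law of $(b_0,\dots,b_{t-1})$ is uniform on $\{2,\dots,n\}^t$, with $(n-1)^t$ equally likely outcomes.

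It remains to count the non-decreasing sequences of length $t$ over an alphabet of size $n-1$. By stars and bars, this is the number of multisets of size $t$ from $n-1$ types, namely $\binom{(n-1)+t-1}{t}=\binom{n+t-2}{t}$. Hence the conditional probability of monotonicity is $\binom{n+t-2}{t}/(n-1)^t$. Multiplying the two factors and subtracting from $1$ gives the claimed expression.

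The main obstacle is the uniformity step. Lemma~\ref{lem:neg-misclass} actually gives a geometric-type distribution over buckets, since sequential probing favours smaller indices, not a uniform one. So the argument must explicitly state uniformity conditional on $b_k\neq1$ as part of the independence/symmetry assumption, or present the result as the value under that idealisation. I would state this assumption clearly rather than try to derive it.
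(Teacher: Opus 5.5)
Your proposal is correct and follows the paper's proof step for step. Both arguments pass to the complement event and obtain $(1-q)^t$ from independence. Both then count the $\binom{n+t-2}{t}$ non-decreasing sequences over $\{2,\dots,n\}$ by stars and bars and divide by $(n-1)^t$.

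Your explicit assumption that the bucket is uniform conditional on $b_k\neq 1$ is exactly what the paper's ``assuming independence'' tacitly uses when it treats all $(n-1)^t$ sequences as equally likely. You are right that Lemma~\ref{lem:neg-misclass} does not supply this, since its conditional law is geometric-type. Under a non-uniform conditional law the monotone-sequence probability is no smaller, so the formula would, if anything, overstate $\Pr[Q_{\text{zero}}\to B_1]$.
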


\begin{proof}
Let $b_0, b_1, \dots, b_{t-1}$ be the bucket IDs assigned to prefixes $Q_0, \dots, Q_{t-1}$. The method outputs 1 if either of the following occurs:
\begin{enumerate}
  \item At least one prefix is classified into bucket 1: $\exists k: b_k = 1$.
  \item The sequence contains a decrease: $\exists i \in [1, t-1]: b_i < b_{i-1}$.
\end{enumerate}

We compute the complement event: 
\begin{itemize}
  \item All $b_k > 1$, which occurs with probability $(1 - q)^t$.
  \item The sequence $b_0, \dots, b_{t-1}$ is non-decreasing over $\{2, \dots, n\}$.
\end{itemize}

Let $S$ be the number of non-decreasing sequences of length $t$ over $\{2, \dots, n\}$, i.e., over a support of size $(n - 1)$. This is a standard stars-and-bars problem~\cite{stars-bars}:
$
S = \binom{(n - 1) + t - 1}{t} = \binom{n + t - 2}{t}.
$

The total number of possible sequences with values in $\{2, \dots, n\}$ is $(n - 1)^t$, assuming independence. Therefore, the probability of a non-decreasing sequence conditioned on all $b_k > 1$ is: $\frac{\binom{n + t - 2}{t}}{(n - 1)^t}.$

Multiplying this by the probability that all $b_k > 1$, we obtain the total probability of not predicting bucket 1:
$
\Pr[Q_{\text{zero}}\not\to B_1] = (1 - q)^t \cdot \frac{\binom{n + t - 2}{t}}{(n - 1)^t}.
$ Hence, the desired probability is: $
\Pr[Q_{\text{zero}}\to B_1] = 1 - (1 - q)^t \cdot \frac{\binom{n + t - 2}{t}}{(n - 1)^t}. $
\end{proof}

$q$ above is also the probability of an empty-answer query classified with $B_1$ in Lemma~\ref{lem:neg-misclass}. Here, we prove we can achieve a higher probability in Lemma~\ref{lem:prefix-negative}. When $t\ge2$, $\frac{\binom{n + t - 2}{t}}{(n - 1)^t} < 1$. Thus, $1 - (1 - q)^t \cdot \frac{\binom{n + t - 2}{t}}{(n - 1)^t} \ge 1 - (1 - q)^t \ge 1-(1-q)^2$. When $q \in (0,1)$, $1-(1-q)^2 > q$. Thus, we have $1 - (1 - q)^t \cdot \frac{\binom{n + t - 2}{t}}{(n - 1)^t} > q$, i.e., achieving a larger probability to classify $Q_{zero}$ with $B_1$. 

Note $l_1$ is used solely for theoretical analysis and is not needed during online estimation. This is because for any prefixes $Q_i$ and $Q_j$, if $i < j$, then the corresponding bucket IDs satisfy, $b_i \le b_j$.

\vspace{-1mm}
\section{Parameter Selection}\label{sec:para_select}
One of the key advantages of \BF{} is its tunability: users can control both estimation accuracy and system overhead via two intuitive parameters, the desired error bound $eb$ for non-empty-answer queries and the minimum probability $p_n$ of correctly identifying empty-answer queries. Achieving this tunability requires configuring  \BF{}, which is defined by parameters such as the number of buckets, the number of filter layers, and the false positive rates of the Bloom filters. In this section, we present a formal framework to guide parameters selection. 

\vspace{-2mm}
\subsection{Cost Modeling}\label{sec:storage_ana}
We begin by analyzing the storage cost for a single bucket $B_i$. Let $N_p = |B_i.K|$ denote the number of positive (in-bucket) queries and $N_n = \sum_{j\in[1, 2, ..., n]\cap(j>1 \cup j =1)}|B_j.K|$ denote the number of negative queries (from other buckets) considered during filter construction for $B_i$. Let $L$ denote the maximum length of a query. Thus, a query in the lookup table takes $8L$ bits.

The storage usage of $B_i$ consists of three parts: (1)  Bloom filters in odd-number layers:
$\sum_{i=1}^{\lfloor\frac{m}{2}\rfloor}\frac{-\text{ln}f_{2i-1}}{(\text{ln}2)^2} N_p\prod_{j=1}^{i}f_{2j-2}$;
(2) Bloom filters in even-number layers: $\sum_{i=1}^{\lfloor\frac{m-1}{2}\rfloor}\frac{-\text{ln}f_{2i-1}}{(\text{ln}2)^2} N_n \prod_{j=1}^{i}f_{2j-1}$
(3) Lookup table in the last layer: $N_p\prod_{j=1}^{\lfloor\frac{m}{2}\rfloor}f_{2j}8L$ if $ m \text{ mod } 2 = 1$, otherwise $N_n\sum_{j=1}^{\lfloor\frac{m}{2}\rfloor}f_{2j-1}8L$. 

For simplicity in our analysis, we assume all Bloom filter layers within a bucket share the same false positive rate, $f$. Given this, we can express the storage cost with:

\noindent If $m \text{ mod } 2 =1$, let $m=2k+1$ ($k \ge 1$) and we have:
\begin{align}\label{eq:s-odd}
  S_{\text{odd}}(k,f)= \frac{-\text{ln}f}{(\text{ln}2)^2}\frac{N_p+N_nf}{1-f}(1-f^k) + 8LN_pf^k  
\end{align}

\noindent And if $m \text{ mod }2=0$, let $m=2k$ ($k>1$) and we have:
\begin{align}\label{eq:s-even}
  S_{\text{even}}(k, f) = \frac{-\text{ln}f}{(\text{ln}2)^2}\frac{1}{1-f} \left[ N_p (1 - f^{k}) + N_n f (1 - f^{k-1}) \right] + 8 L N_n f^{k}  
\end{align}

The total storage is the sum of them over all buckets, $B_i$ ($i \ge 2$).

\subsection{Parameters Optimization}
\subsubsection{Optimizing Parameters Without Empty-answer Query Constraints}\label{sec:para_select_basic}
Our primary goal is to select the number of layers $m$ (or equivalently, $k$ in Equation~\ref{eq:s-odd} and ~\ref{eq:s-even}) and the false positive rate $f$ in Bloom filters to minimize the total storage cost. 
Once the user specifies the desired error bound $eb$, the number of buckets $n$ is determined. The values of $N_p$, $N_n$, and $L$ for each $B_i$ are fixed. 

However, this presents a mixed-integer optimization problem as in Equation~\ref{eq:s-odd} and ~\ref{eq:s-even}, $f$ is continuous while $k$ is restricted to integers. To solve this efficiently, we adopt a practical, iterative approach:

\noindent \textbf{(1) Enumerate Layer}: We iterate through a small set of integer values for the number of layers, $m$, starting from 2 and set $k=\lfloor\frac{m}{2}\rfloor$. And based on our empirical analysis, only a few values of $m$ suffice.

\noindent \textbf{(2) Find Optimal $f$}: For each chosen $m$ ($k$), we find the optimal false positive rate, $f$, that minimizes the storage cost function. $S_{odd/even}$ is a contentious function when $f\in(0,1)$. Thus, the minimized value of $S_{odd/even}$ is obtained at the root of $S'(k,f)=0$ or at the end point. However, setting $S’(f)=0$ mixes $\ln f$ and $f^{k}$ terms -- no closed-form root for generic $k$, $N_1$, $N_2$ and it even has more than one root. Thus, we employ DIRECT~\cite{DIRECT}, a deterministic, derivative-free method for global optimization over bounded domains. 

\noindent \textbf{(3) Select Best Configuration}:  We choose the $(m, f)$ pair that results in the minimum storage cost across all tested values of $m$.

\vspace{-1mm}
\subsubsection{Optimizing Parameters with Empty-answer Query Guarantees}
To enhance accuracy in identifying empty-answer queries, users can specify an additional constraint: a lower bound $p_n$ on the probability that an empty-answer queries is classified into bucket $B_1$. This constraint restricts the feasible range for the false positive rate $f$.

Based on Lemma~\ref{lem:prefix-negative}\footnote{The analysis for Lemma~\ref{lem:neg-misclass} follows a similar process but is more straightforward, so we omit it.}, we derive the constraint:
 $1 - (1 - q)^t \cdot \frac{\binom{n + t - 2}{t}}{(n - 1)^t}\ge p_n$ and we get the condition on $q$ that $q \ge 1-(n-1)\left(\frac{1-p_n}{\binom{n+t-2}{t}}\right)^{\frac{1}{t}}$. For a given $eb$, $n$ is determined and is a constant here while different empty-answer queries may have different values of $t$. Let $g(t)=\left(\frac{1-p_n}{\binom{n+t-2}{t}}\right)^{\frac{1}{t}}$. $g(t)$ increases as $t$ increases. To support different $t$s, we therefore set $q \ge 1-(n-1)g(2)$.

Let $c=1-(n-1)g(2)$ and we have the constraint: $q\ge c$. 
Based on Lemma~\ref{lem:neg-misclass} and Lemma~\ref{lem:prefix-negative}, we have $q=\prod_{i=2}^n(1-p_i)$, where $p_i$ is the probability that a query is a false positive in $B_i$ ($i \ge 2$).  
For simplicity in our analysis, we assume that $p_i$ is the same cross different buckets. Otherwise, there can be lots of different assignments of $p_i$ for $B_i$. Then, we have $q=(1-p)^{n-1}$ and $p \le 1-  c^{\frac{1}{n-1}}$. Let $c'=1-  c^{\frac{1}{n-1}}$. Based on Equation~\ref{eq:neg_p_pw}, we have:

\begin{equation}\label{eq:neg_f_pw}
\vspace{-1mm}
\begin{cases}
\frac{f-f^m}{1+f} \le c' & \text{if } m \text{ mod }2= 1 \\
\frac{f+f^m}{1+f} \le c' & \text{if } m \text{ mod }2= 0
\end{cases}
\vspace{-1mm}
\end{equation}

Since $n$ and $p_n$ are constants, $c'$ is a constant. Due to Abel-Ruffini theorem~\cite{Abel-theorem}, there are no direct formula to compute the value of $f$ for general $m$. Thus, we employ numeric solver to compute the range of $f$. After getting the feasible range of $f$, we and apply the same optimization strategy as in Sec.~\ref{sec:para_select_basic}, restricted to this range.

\vspace{-2mm}
\subsection{Tree Index for High-Cardinality Buckets}\label{sec:upper_tree}

Since $q=(1-p)^{n-1}$, the probability of an empty-answer query classified with $B_1$ in Lemma~\ref{lem:neg-misclass} and Lemma~\ref{lem:prefix-negative} is also affected by the number of buckets. Based on Equation~\ref{eq:neg_f_pw} and $c'=1-c^{\frac{1}{n-1}}$, where $c\in(0,1)$, a larger $n$ results in a smaller $c'$. Thus, a smaller $f$ is needed and leads to a larger Bloom filter size. We design a new strategy to reduce the number of buckets with the layered filters.

Specifically, we build a tree-based index to manage queries within $B_i$ if $i$ is larger than a predefined threshold. 
The nodes in the tree are stored in a contiguous array. Each node utilizes 4 bits to record the bucket ID (which bucket) of the query and 2 bytes to represent the index of its first child. The children of each node are stored sequentially, and a 1-bit flag is used to indicate whether a node is the last child in its sequence. This representation reduces memory overhead while preserving efficient query operations.

We incorporate the threshold value selection process into our parameter selection framework. Specifically, we first enumerate candidate thresholds (from $n$ to $1$) to determine which buckets will adopt the tree-based index. For each threshold, we then apply the same optimization strategy as in Sec.~\ref{sec:para_select_basic}.

\vspace{-1mm}
\section{Support Longer Queries}\label{sec:extend_query}
\tocheck{The core \BF{} approach supports prefix, suffix, and substring queries under a length constraint on the query string. Here, we extend \BF{} to handle longer query strings. For clarity, we first use ($\%S\%$) queries as an example to illustrate the main idea.}

Suppose the maximum substring length that \(\BF{}\) can accurately estimate is \(L\).
For a query string \(Q = q_1q_2\ldots q_{|Q|}\) with \(|Q|>L\), we estimate \(Card(\%Q\%)\) by converting the problem into a \emph{Markov process}~\cite{Markov} that models dependencies among consecutive substrings up to length \(L\). 
We rewrite the probability of \(Q\) (and hence its expected cardinality) using the chain rule:
$
P(q_1,\ldots,q_{|Q|}) = P(q_1,\ldots,q_L)
\prod_{i=L+1}^{|Q|}
P(q_i \mid q_1,\ldots,q_{i-1}).
$

Since \(\BF{}\) provides cardinality (selectivity) up to length \(L\),
we approximate each conditional term by conditioning on the most recent \(L-1\) characters,
leading to an \((L-1)\)-order Markov process:
$
P(q_i \mid q_1,\ldots,q_{i-1})
\approx
P(q_i \mid q_{i-L+1},\ldots,q_{i-1})
=
\frac{P(\%q_{i-L+1}\ldots q_i\%)}{P(\%q_{i-L+1}\ldots q_{i-1}\%)}$.
This yields:
$
\widehat{Card}(\%Q\%)
\approx
|\mathcal{S}|\,
P(q_1,\ldots,q_L)
\prod_{i=L+1}^{|Q|}
\frac{P(\%q_{i-L+1}\ldots q_i\%)}{P(\%q_{i-L+1}\ldots q_{i-1}\%)}
$, where \(|\mathcal{S}|\) denotes the total number of strings. This approximation assumes that the probability of a new character depends primarily on its most recent \(L-1\) predecessors, i.e., long-range correlations are negligible once the last \(L\) characters are known.
Empirically, most real string distributions exhibit such local dependency~\cite{MOKVI}.

\begin{table}[t]
\centering
\caption{Statistics of string sets ($\mathcal{S}_D$) and query sets ($\mathcal{S}_Q$).}
\vspace{-2mm}
\begin{tabular}{|c|c|r|r r|r|r r|}
\hline
\multirow{2}{*}{Dataset} & \multirow{2}{*}{$|\mathcal{V}|$} & \multicolumn{3}{c|}{String Set ($\mathcal{S}_D$)} & \multicolumn{3}{c|}{Query Set ($\mathcal{S}_Q$)} \\
\cline{3-8}
 &  & $|\mathcal{S}_D|$ & $\ell_{\text{avg}}$ & $\ell_{\text{max}}$ & $|\mathcal{S}_Q|$ & $\ell_{\text{avg}}$ & $\ell_{\text{max}}$ \\
\hline
Author   & 53 & 111,162 & 13.86  & 43 & 414,614 & 7.70 & 10 \\
DBLP-AN   & 27    & 450,000 & 14.61  & 44 & 701,460 & 8.26 & 10 \\
IMDB-AN   & 27    &   550,000  & 14.14 & 39 & 750,000  & 8.47 & 10 \\
IMDB-MT   & 38     &   357,923  & 18.44 & 79 & 750,000  & 8.19 & 10 \\
\hline
\end{tabular}
\label{tab:dataset-stats}
\end{table}


\smallskip\noindent\textbf{Theoretical Error Bound (under an $(L\!-\!1)$-order Markov assumption).}
Let $e_b$ be the maximum Q-error guarantee of \BF{} for any substring
of length $\le L$: $e_b^{-1}C(U)\le \widehat C(U)\le e_b\,C(U) (\forall\, |U|\le L).$
Under this assumption, for any query $Q$ with $|Q|>L$, the Q-error of the
Markov-based estimator is bounded by
$e_b^{\,1+2(|Q|-L)}.$
The bound follows because the estimator uses one $L$-gram term
(with error $\le e_b$) and $|Q|-L$ ratio terms, each involving two
$\le L$ substrings (contributing at most $e_b^2$ per step).

\smallskip\noindent\textbf{Discussion.} The same Markov-process idea can apply to prefix queries. For $|Q|>L$, we adopt an  $(L-1)$-order Markov model to express: $
\widehat{Card}(\%Q\%) \;\approx\; |\mathcal{S}|\; P(q_1\ldots q_L\%)
\prod_{i=L+1}^{|Q|}
\frac{P(\%q_{i-L+1}\ldots q_i\%)}{P(\%q_{i-L+1}\ldots q_{i-1}\%)}
$. By reversing the query string and treating the suffix as a prefix, the $(L-1)$-order Markov estimator can support suffix queries as well.

\section{Experimental Study}\label{sec:exp}
\subsection{Experimental Setup}\label{sec:exp-setup}
\noindent\textbf{Datasets}. For synthetic testing, where queries are randomly generated, we conduct experiments on four widely used datasets, referenced in existing studies~\cite{LPLM,Astrid,E2E,FACE}. The detailed statistics of these datasets are presented in Table~\ref{tab:dataset-stats}\footnote{We exclude \textsf{PartName} in TPC-H~\cite{TPCH}, which is widely used in existing benchmarks, as it is synthetically generated and contain repeated words, which do not reflect real-world query patterns.}.



\begin{table*}[ht]
\centering
\caption{Q-error for different estimators on valid queries}
\vspace{-2mm}
\label{tab:qerror_basic_patterns}
\resizebox{0.99\textwidth}{!}{
\begin{tabular}{|c|c|c|c|c|c|c|c|c|c|c|c|c|c|c|c|c|c|}
\hline
\multirow{2}{*}{Query} & \multirow{2}{*}{Method}
  & \multicolumn{4}{c|}{Author} 
  & \multicolumn{4}{c|}{IMDB-AN} & \multicolumn{4}{c|}{DBLP-AN} & \multicolumn{4}{c|}{IMDB-MT}\\\cline{3-18}
 & &Mean & 90th & 99th & Max 
     & Mean & 90th & 99th & Max & Mean & 90th & 99th & Max 
     & Mean & 90th & 99th & Max \\\hline
\multirow{6}{*}{S\%}
  & PostgreSQL      &  9.6   & 11.0 & 11.0  & 1098.0   
                   &  9.81  & 11.0 & 11.0  & 838.09   
                   &  7.99  & 11.0  &  11.0\second   & 1355.18  
                   &  9.36  & 11.0  &  11.0   & 1098.0   \\
                   
  & \BF{}    &  1.38\best&1.41&1.46\best&1.5\best 
                   &  1.4\best&1.41&1.41\best&1.5\best
                    &  1.37\best&1.41\best&1.46\best&1.5\best  
                   &  1.39\best&1.41&1.46\best&1.5\best \\
 
  & E2E             &  2.30  & 1.49  &  12.26 & 1446.39 
                   &  1.74   & 1.01\second &  10.73 & 370.95
                   &  2.57  & 1.82  & 15.26 & 474.23\second   
                   &  1.99  & 1.10\second &  12.43 & 4727.97 \\
                   
& Astrid          &  1.94&1.24\best&10.68&412.31  
                   &  1.61&1.0\best&9.5&876.22 
                   &  2.32&1.67&12.76&768.65  
                   &  1.78&1.03\best&11.78&211.02\second   \\
                   
  & LPLM            &  8.98  & 3.41  &  86.53 & 7269.58 
                   &  4.93   & 2.73  &  34.65 & 10829.71 
                   &  8.95  & 3.11  &  102.48& 5303.01 
                   &  6.55  & 2.21  &  65.66 & 28987.37 \\
                   
  & CLIQUE          &  1.7\second & 1.35\second  & 6.33\second & 166.0\second 
                   &  1.5\second  & 1.07  & 7.24\second & 292.02\second  
                   &  2.19\second&1.62\second&11.4&589.77 
                   &  1.66\second&1.23&8.29\second&215.5 \\
\hline\hline
\multirow{6}{*}{\%S}
  & PostgreSQL      &  9.5   & 11.0 &  11.0  & 1098.0   
                   &  9.83&11.0&11.0&7220.64   
                   &  7.91  & 11.0  &  11.0   & 2257.18  
                   &  9.46  & 11.0  &  11.0   & 2226.09  \\
                   
  & \BF{}    &  1.38\best&1.41\second&1.46\best&1.5\best  
                   &  1.4\best&1.41&1.46\best&1.5\best 
                &  1.37\best&1.41\best&1.46\best&1.5\best  
                   &  1.39\best&1.41&1.46\best&1.5\best  \\
 
  & E2E             &  2.08  & 1.52  & 9.03  & 589.68  
                   &  1.99   & 1.06\second     & 12.88 & 12785.0
                    &  2.44  & 1.83  &  12.62 & 842.46  
                   &  1.95  & 1.12\second &  11.00 & 1147.66  \\
                   
 & Astrid          &  2.63&1.3\best&17.49&3418.58  
                   &  2.32&1.0\best&19.98&4606.3
                    &  3.02&1.62\second&22.49&2833.54 
                   &  2.07&1.01\best&16.88&1731.33  \\
                   
  & LPLM            &  4.75  & 3.42  & 22.05 & 4727.97 
                   &  7.57&3.21&69.68&10798.68
                   &  6.25  & 3.14  &  33.73 & 23830.59
                   & 11.97  & 3.00  &  83.22 & 87207.32 \\
                   
  & CLIQUE          &  1.77\second&1.44&6.45\second&50.8\second  
                   &  1.59\second&1.1&7.82\second&927.21\second
                   &  2.23\second&1.75&10.84\second&267.64\second  
                   &  1.62\second&1.2&8.0\second&356.34\second \\
                   
\hline\hline
\multirow{9}{*}{\%S\%}
  & PostgreSQL      &  9.17  & 11.0 &  11.0  & 1098.0   
                   &  9.50  & 11.0 & 11.0  & 5213.55
                   &  7.74  & 11.0  &   11.0\second  & 1547.0   
                   &  9.33  & 11.0  & 11.0   & 7752.73  \\
                   
  & MO             & 2.88&1.0\best&11.01&16056.0
                    & 1.51&1.0\best&6.0\second&19600.0
                    & 2.33&1.02\second&14.0&1321.31
                    & 1.5\second&1.0\best&5.6\second&4542.0  \\
                    
  & LBS             & 2.15&1.0\best&13.0&39.4\second
                    & 1.52&1.0\best&9.0&117.07
                    & 2.47&1.0\best&15.26&277.15
                    & 1.62&1.0\best&10.0&90.91  \\
    
  & \BF{}    &  1.38\best&1.41\second&1.46\best&1.5\best  
                   &  1.4\best&1.41&1.46\best&1.5\best
                    &  1.37\best&1.41&1.46\best&1.5\best  
                   &  1.39\best&1.41&1.46\best&1.5\best  \\
 
  & E2E             &  2.37  & 1.57  &  11.73 & 1278.92 
                   &  2.14&1.16&16.15&2901.0 
                   &  2.72  & 1.88  &  17.02 & 847.46  
                   &  2.27  & 1.27 & 14.39 & 3961.14  \\
                   
 & Astrid          &  2.92&1.42&20.46&6879.51 
                   &  2.83&1.0\best&24.89&8016.84
                   &  3.37&1.68&27.9&6385.91 
                   &   2.68&1.01\second&23.96&6345.04      \\
                   
  & LPLM            & 10.32  & 4.19  &  113.9 & 9451.88 
                   & 9.93&3.84&92.56&15360.61
                    &114.86  & 4.05  &  190.55&17860.03
                   & 18.36  & 3.88  &  220.79&45274.68 \\

  & CLIQUE          &  1.83\second&1.46&6.89\second&166.0  
                   &  1.62&1.13\second&7.91&557.28
                    &  2.3\second&1.72&11.93&479.0  
                   &  1.7&1.24&8.81&908.69 \\
                   
  & SSCard        & 2.02&1.0\best&12.0&59.0
                    & 1.45\second&1.0\best&7.0&72.0\second
                     & 2.41&1.0\best&16.0&82.0\second
                    & 1.58&1.0\best&9.0&71.0\second  \\\hline

                   \multicolumn{18}{@{}l}{\small Note: \best\ marks the best (lowest) value, while \second\ marks the second-best.}
\end{tabular}\label{tab:basic-valid-acc}}
\vspace{-4mm}
\end{table*}

\noindent\textbf{Methods for Comparison}. We compare the following methods:
\begin{itemize}[leftmargin=*]
    \item PostgreSQL~\cite{PG}: We utilize the \textsf{EXPLAIN} statement to obtain the cardinality estimation of \textsf{LIKE} queries.
    \item MO~\cite{MOKVI}: A substring cardinality estimator based on
pruned suffix tree. We set the prune threshold to 2\%, reserving 2\% nodes of the full suffix tree.
    \item LBS~\cite{LBS}: One of the traditional state-of-the-art methods designed for patterns of the form $\%S\%$ based on $n$-gram.
    \item E2E~\cite{E2E}: A deep learning-based method that uses sample bitmap as encoding and supports $S\%$, $\%S$, and $\%S\%$.
    \item Astrid~\cite{Astrid}: A state-of-the-art deep learning method that supports $S\%$, $\%S$, and $\%S\%$. We adopt its embedding-based variant, as it performs better than the natural language model-based variant.
    \item LPLM~\cite{LPLM}: A state-of-the-art learning-based method designed for universal \textsf{LIKE} queries.
    \item CLIQUE~\cite{CLIQUE}: A state-of-the art learning-based method designed to support basic patterns and combined patterns.
    \item \BF{}: Our proposed estimator that focuses on $S\%$, $\%S$, and $\%S\%$. The source code is available at~\cite{TR}.
\end{itemize}
We adopt PostgreSQL v$13.1$ and for other baselines we adopt the codes provided by the authors and use the default settings. For our method, we set $eb=1.5$ in default, enabling the frontier-based optimization (cf. Sec.~\ref{sec:frontier}) and building the tree structure for the large cardinality (cf. Sec.~\ref{sec:upper_tree}). Note that for each method, we compare it only against the query patterns that it supports. 

\noindent\textbf{Query Generation}. 
We consider three \textsf{LIKE} query patterns: $S\%$, $\%S$, $\%S\%$, where $|S| <= 10$. For non-empty-answer queries, we evaluate them on the same $250,000$ queries\footnote{We aim to maintain a training-to-testing ratio of 2:1. If it is not possible to sample the desired number of queries, we adjust the sample sizes while preserving this ratio.} by random sampling. 

Existing learning-based methods often adopt their own training and testing data generation strategies, varying in query patterns and the number of queries used. 
To ensure a fair and meaningful comparison, we train all learning-based methods on the same training queries. This unified setup ensures that performance differences reflect model design rather than discrepancies in data exposure.

\vspace{-1mm}
Specifically, we randomly sample $500,000$ queries for each query pattern as training data . For methods that train a single model across all query patterns -- such as LPLM, E2E, and CLIQUE -- we combine the training and testing data from all query patterns to train and evaluate a unified model. For empty-answer query testing, we generate queries by appending a random number of characters to existing non-empty-answer testing queries. 

\noindent\textbf{Evaluation Aspects}. We evaluate all the methods above from:

\noindent\underline{\textit{Accuracy}} -- How closely do their predictions match the ground truth for fundamental query patterns in \textsf{CE4Str}? We use Q-error as the evaluation metric, reporting both the mean Q-error and its distribution ($50\%$, $90\%$, $99\%$, and $100\%$ quantile).

\noindent\underline{\textit{Robustness}} -- Do these methods maintain consistent accuracy across different cardinality ranges and the lengths of queries? We break down accuracy across varying cardinality ranges and queries' lengths. We report the corresponding Q-error for each range/length. 

\noindent\underline{\textit{Empty-answer Queries Support}} -- How well these methods support empty-answer queries? We report the percentage of empty-answer queries with the estimated cardinality smaller than $2$.

\noindent\underline{\textit{Overhead}} -- How quickly can each method start to handle online queries (preparation time), and how efficiently can it perform online estimation (inference time)? Evaluating both metrics helps ensure that the method is not only accurate but also practical for deployment. We also report the on-disk size (storage usage) of the data structures used by each method.

\noindent\underline{\textit{Tunability \& Optimization}} -- How does \BF{} perform under different user-specified parameters, and what are the benefits of our proposed optimizations?





\vspace{-3mm}
\subsection{Overall Estimation Accuracy}\label{sec:exp-acc}
Table~\ref{tab:basic-valid-acc} reports the Q-error of six estimators across four datasets. We summarize the key observations as follows: (1) Overall, \BF{} consistently ranks among the top two methods across nearly all datasets and Q-error percentiles, demonstrating strong robustness. For instance, on \textsf{DBLP-AN}, \BF{} achieves a mean Q-error of 1.37 on all query patterns and significantly outperforms the second-best, CLIQUE with a mean Q-error around $2.2$. 
(2) In terms of worst-case behavior, \BF{} shows strong tail robustness. Across all datasets, its maximum Q-error remains under 1.5, whereas methods like CLIQUE and LPLM suffer from high variance -- e.g., on \textsf{DBLP-AN} with the substring query, CLIQUE reaches a maximum Q-error of 479.0, and LPLM exceeds 1000. 
(3) PostgreSQL consistently underperforms across all metrics. Its mean Q-error ranges from 7.9 to 9.6. While learning-based methods like LPLM and E2E improve over PostgreSQL in the mean case, they suffer from high variance and large tail errors. (4) On $\%S\%$, methods specifically designed for substring patterns (MO, LBS, SSCard) achieve impressive accuracy, particularly in terms of 90th percentile Q-error.

\begin{figure*}[t]
 \vspace{-3mm}
  \centering
  \begin{subfigure}[b]{\textwidth}
    \centering
    \includegraphics[width=0.7\textwidth]{ 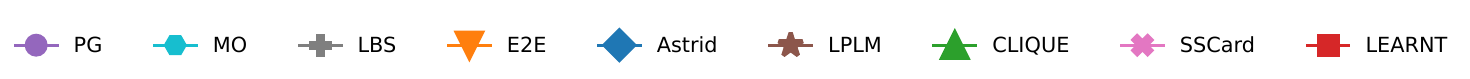}
  \end{subfigure}\\[-2mm]
    \begin{subfigure}[b]{0.25\textwidth}
    \centering
    \includegraphics[width=\textwidth]{ 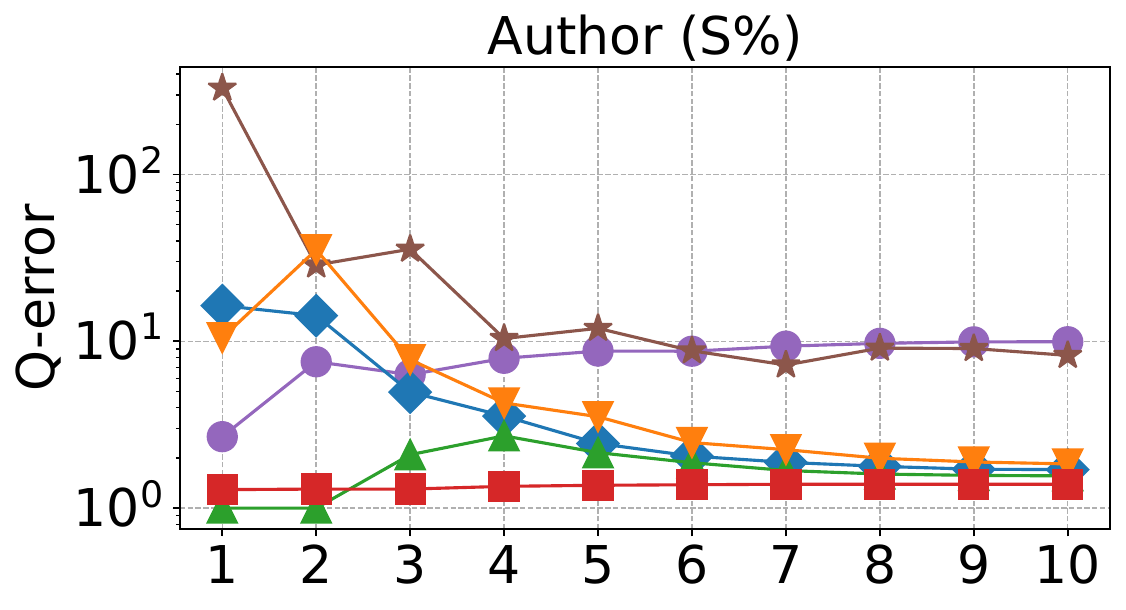}
    \label{fig:author_diff_len_prefix}
  \end{subfigure}%
  \hfill
  \begin{subfigure}[b]{0.24\textwidth}
    \centering
    \includegraphics[width=\textwidth]{ 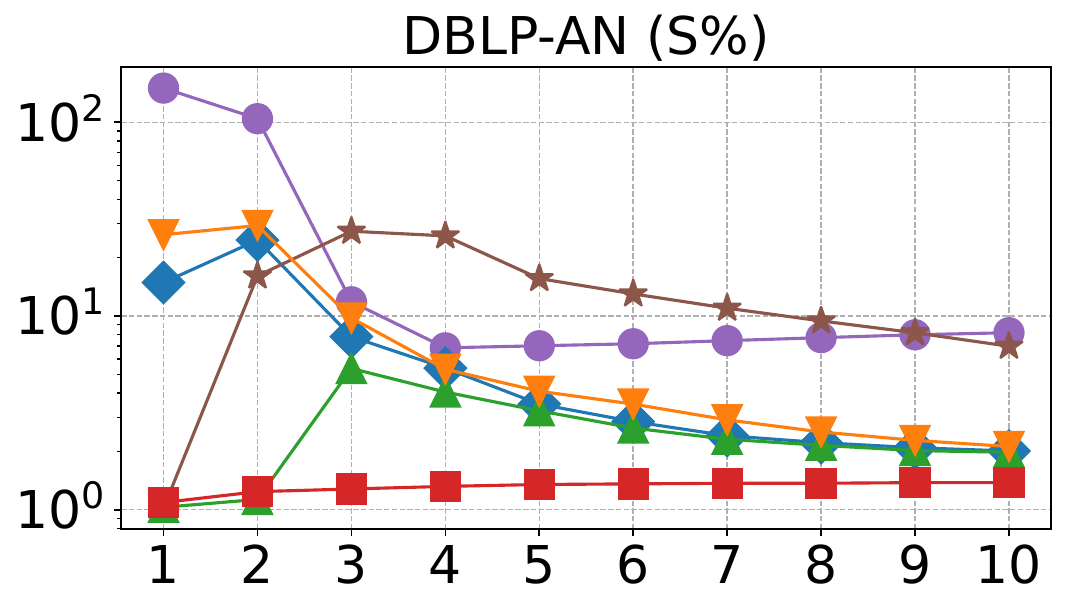}
    \label{fig:dblp_an_diff_len_prefix}
  \end{subfigure}%
  \hfill
  \begin{subfigure}[b]{0.24\textwidth}
    \centering
    \includegraphics[width=\textwidth]{ 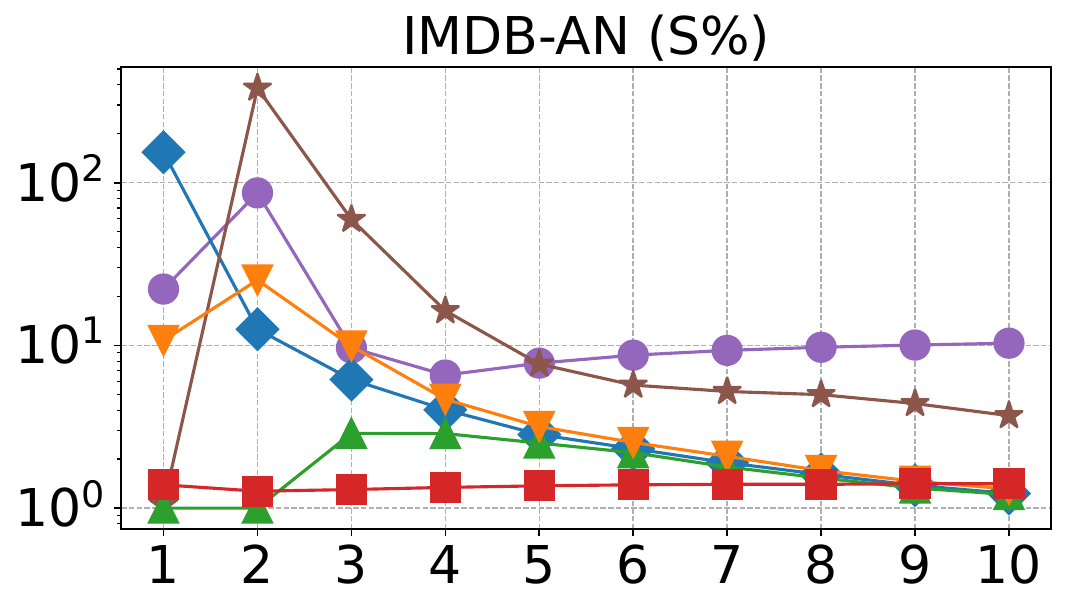}
    \label{fig:imdb_an_diff_len_prefix}
  \end{subfigure}%
  \hfill
  \begin{subfigure}[b]{0.24\textwidth}
    \centering
    \includegraphics[width=\textwidth]{ 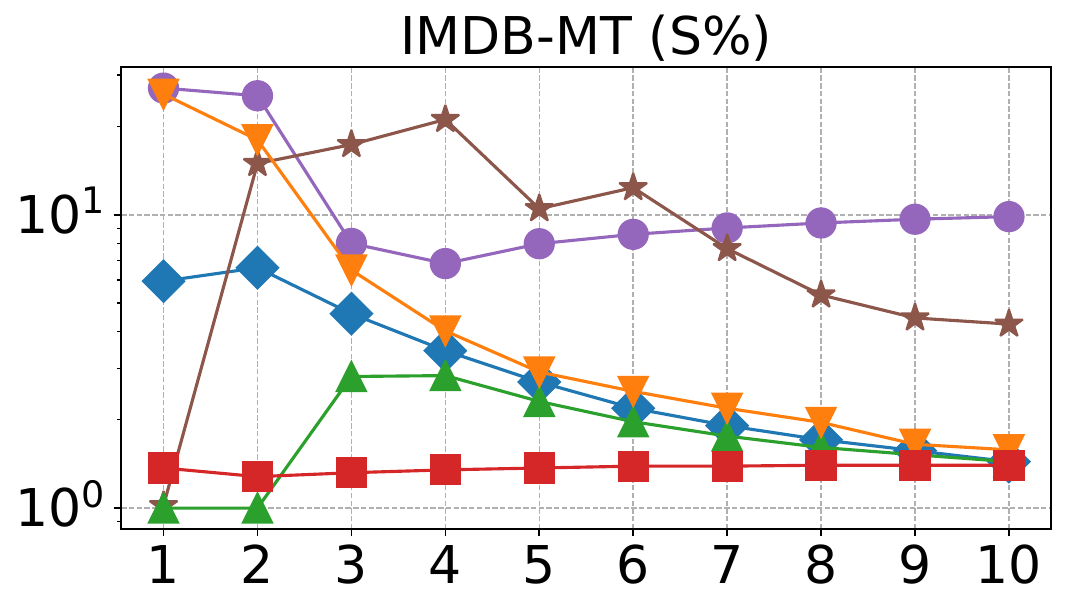}
    \label{fig:imdb_mt_diff_len_prefix}
  \end{subfigure}\\[-2mm]
  \begin{subfigure}[b]{0.25\textwidth}
    \centering
    \includegraphics[width=\textwidth]{ 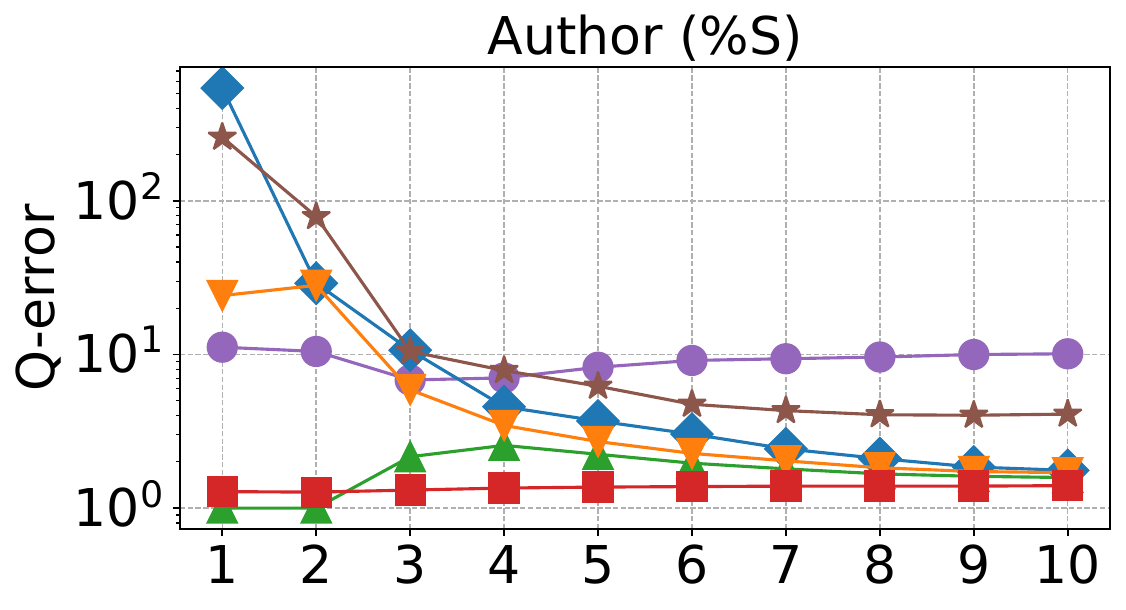}
    \label{fig:author_diff_len_suffix}
  \end{subfigure}%
  \hfill
  \begin{subfigure}[b]{0.24\textwidth}
    \centering
    \includegraphics[width=\textwidth]{ 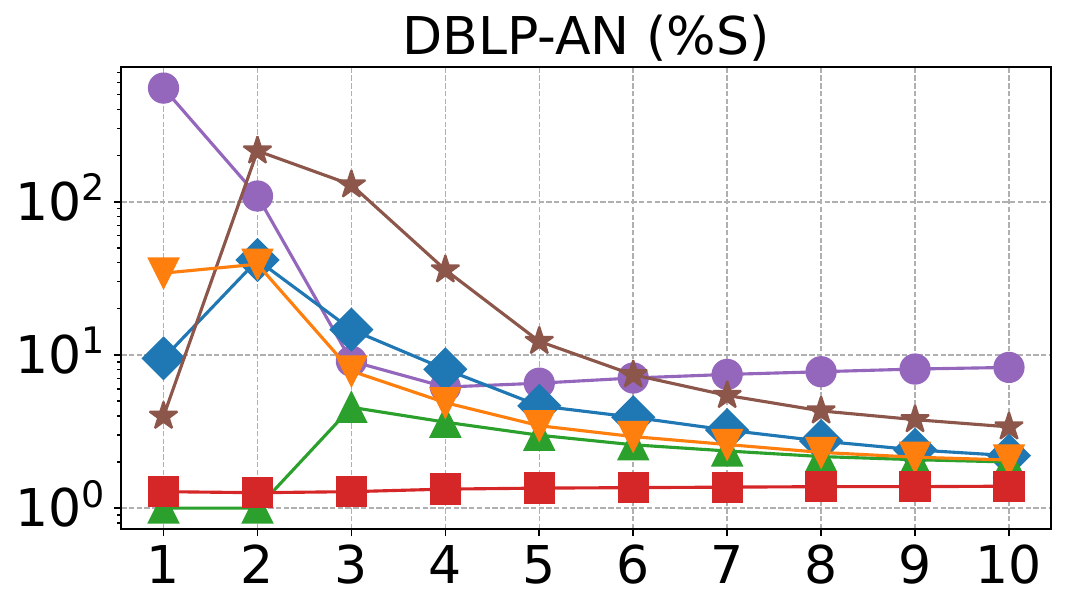}
    \label{fig:dblp_an_diff_len_suffix}
  \end{subfigure}%
  \hfill
  \begin{subfigure}[b]{0.24\textwidth}
    \centering
    \includegraphics[width=\textwidth]{ 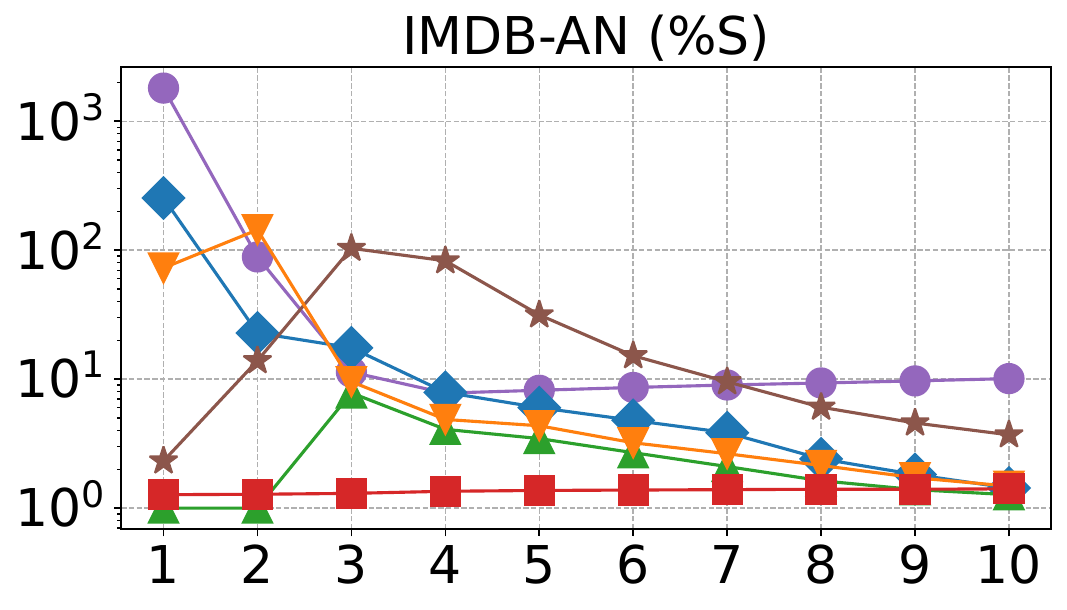}
    \label{fig:imdb_an_diff_len_suffix}
  \end{subfigure}%
  \hfill
  \begin{subfigure}[b]{0.24\textwidth}
    \centering
    \includegraphics[width=\textwidth]{ 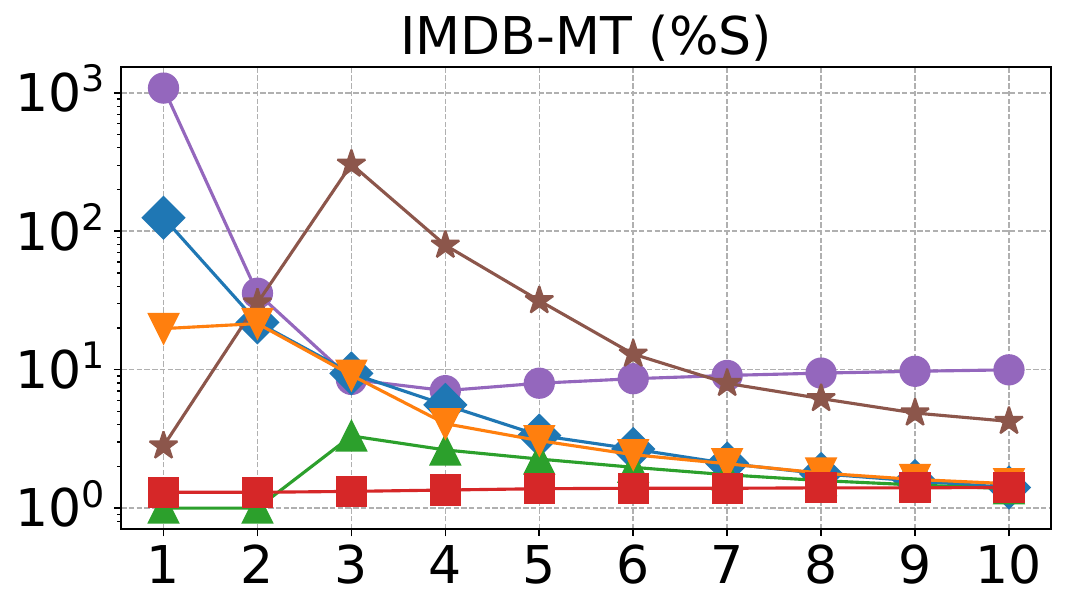}
    \label{fig:imdb_mt_diff_len_suffix}
  \end{subfigure}\\[-3mm]
  \begin{subfigure}[b]{0.25\textwidth}
    \centering
    \includegraphics[width=\textwidth]{ 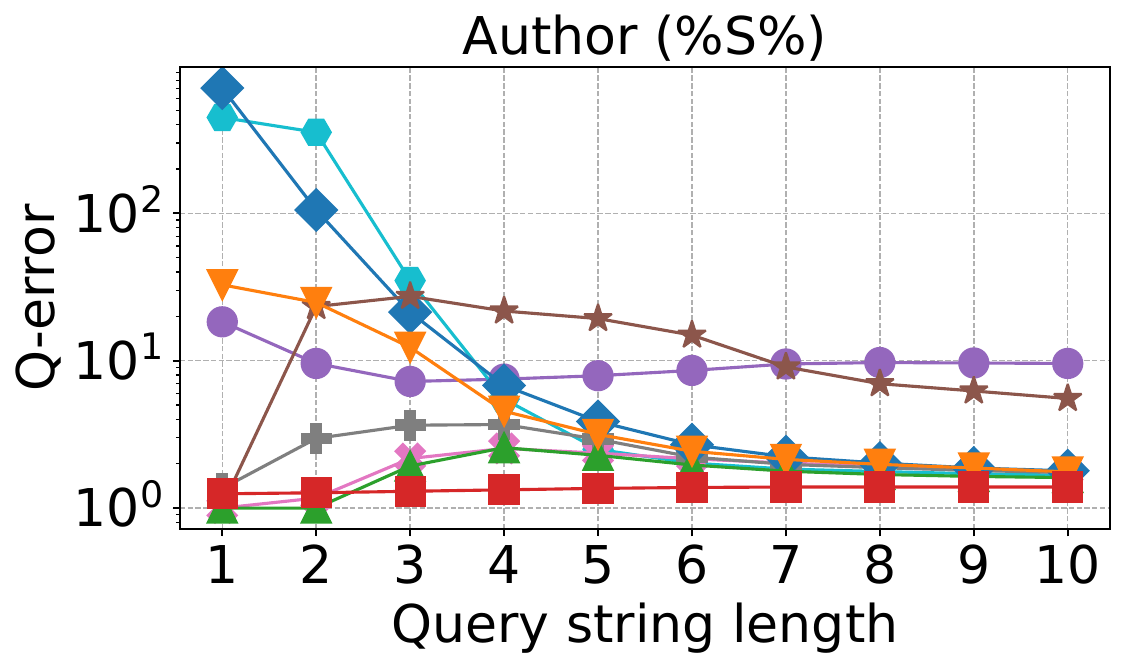}
    \label{fig:author_diff_len_substring}
  \end{subfigure}%
  \hfill
  \begin{subfigure}[b]{0.24\textwidth}
    \centering
    \includegraphics[width=\textwidth]{ 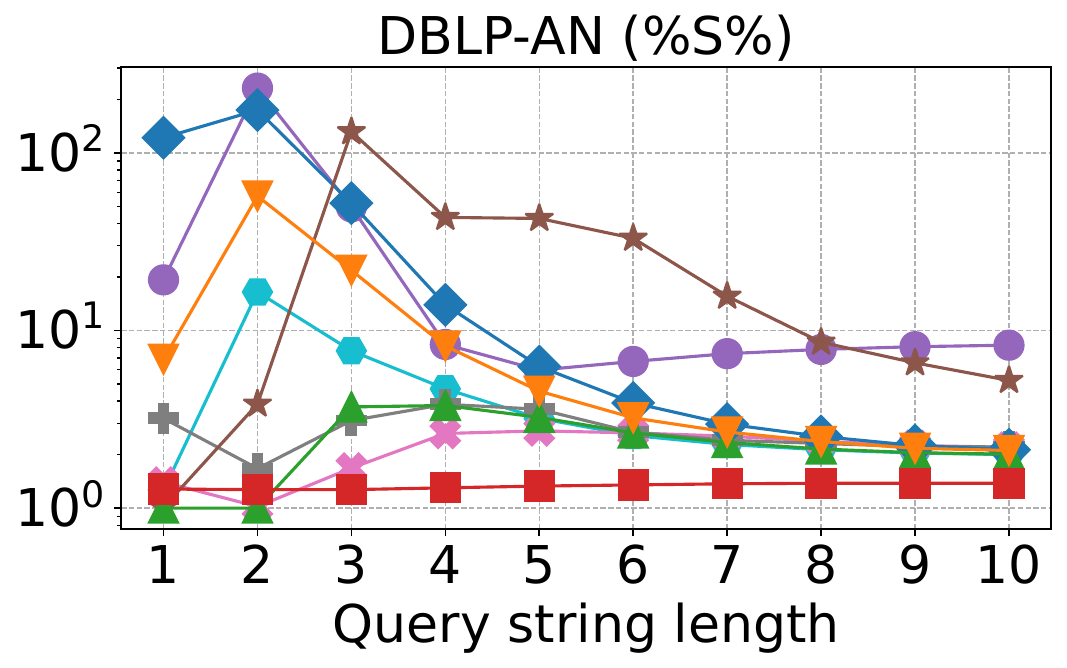}
    \label{fig:dblp_an_diff_len_substring}
  \end{subfigure}%
  \hfill
  \begin{subfigure}[b]{0.24\textwidth}
    \centering
    \includegraphics[width=\textwidth]{ 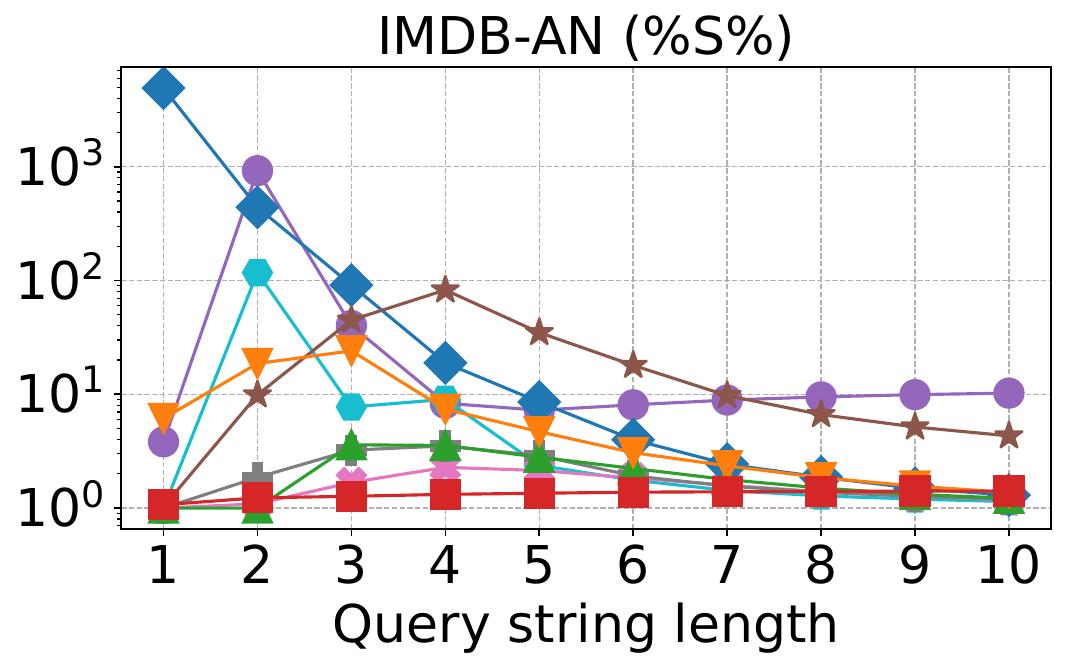}
    \label{fig:imdb_an_diff_len_substring}
  \end{subfigure}%
  \hfill
  \begin{subfigure}[b]{0.24\textwidth}
    \centering
    \includegraphics[width=\textwidth]{ 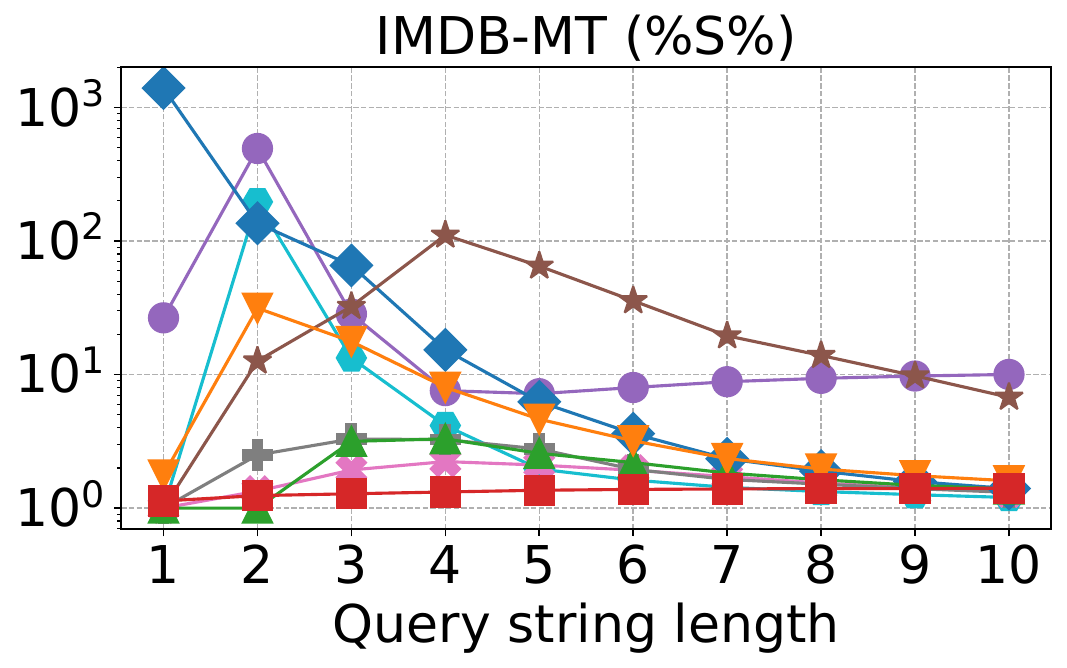}
    \label{fig:imdb_mt_diff_len_substring}
  \end{subfigure}
    \vspace{-6mm}
  \caption{Q-error of all estimators as pattern string length varies.}
  \vspace{-4mm}
  \label{fig:diff_lens}
\end{figure*}

\vspace{-4mm}
\subsection{Robustness of Each Estimator}\label{sec:exp-basic-robust}
\noindent\textbf{Robustness on Different Query Lengths}. Figure~\ref{fig:diff_lens}  illustrates how Q-error changes with query string length across four benchmarks and three query patterns ($S\%$, $\%S$, $\%S\%$). We observe the following trends: (1) Our method consistently yields the lowest Q-error across almost all lengths and query patterns, showcasing strong robustness and generalization. It usually delivers a $1.5-2$x lower Q-error than the next-best method. For example, when the query length $\in [3,7]$, it maintains Q-error close to 1, while others -- especially LPLM and E2E -- can exceed 5; (2) As query strings grow longer, most methods improve due to increased selectivity and they tends to predict the cardinality with small value; (3) When the query length is 1 or 2, CLIQUE achieves slightly better best Q-error than ours. CLIQUE introduces extend N-gram table. to store all frequent N-grams. The queries with small length, i.e., 1 or 2, usually have a large cardinality and are included in the extend N-gram table. Thus, CLIQUE directly answers the query with the exact cardinality without calling the learning model. (4) Learned estimators like LPLM and E2E show high variance and unstable behavior at small lengths (especially length = 1–3), sometimes spiking above $10^2-10^3$.

\begin{figure*}[!h]
  \centering
  \begin{subfigure}[b]{\textwidth}
    \centering
    \includegraphics[width=0.5\textwidth]{ 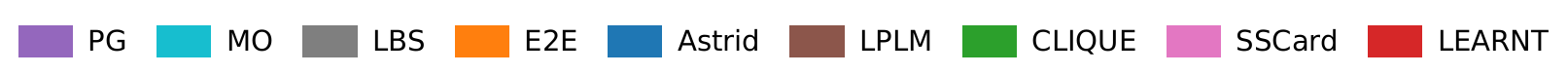}
  \end{subfigure}
\vspace{-5mm}
    \begin{subfigure}[b]{0.25\textwidth}
    \centering
    \includegraphics[width=\textwidth]{ 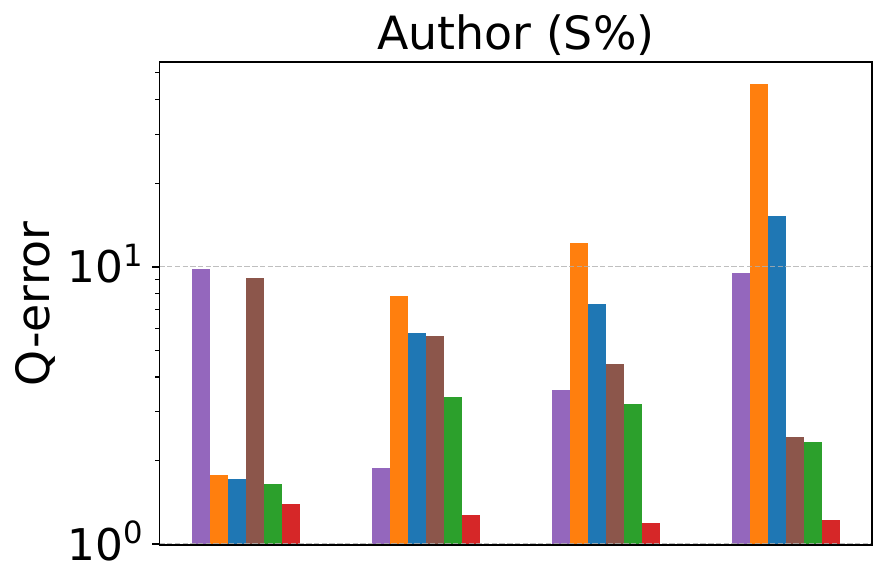}
    \label{fig:author_diff_range_prefix}
  \end{subfigure}%
  \hfill
  \begin{subfigure}[b]{0.24\textwidth}
    \centering
    \includegraphics[width=\textwidth]{ 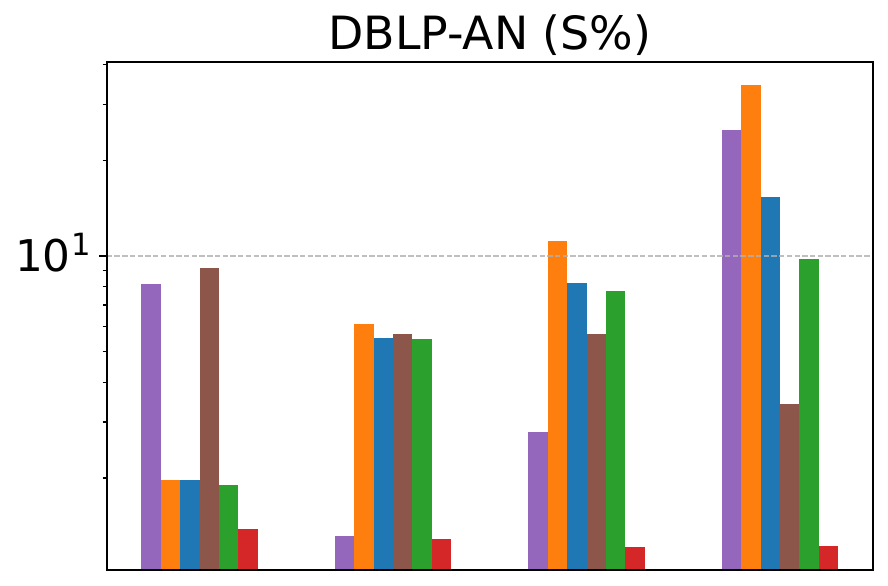}
    \label{fig:dblp_an_diff_range_prefix}
  \end{subfigure}%
  \hfill
  \begin{subfigure}[b]{0.24\textwidth}
    \centering
    \includegraphics[width=\textwidth]{ 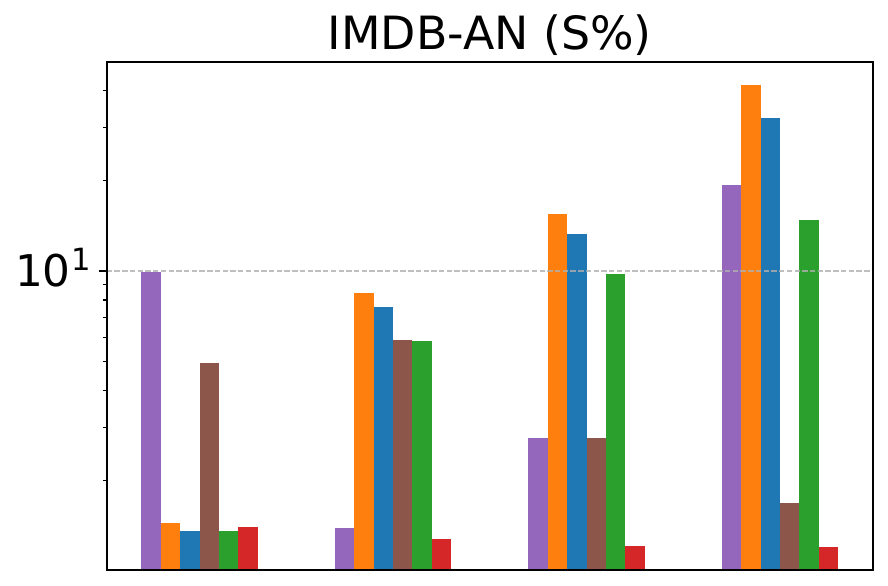}
    \label{fig:imdb_an_diff_range_prefix}
  \end{subfigure}%
  \hfill
  \begin{subfigure}[b]{0.24\textwidth}
    \centering
    \includegraphics[width=\textwidth]{ 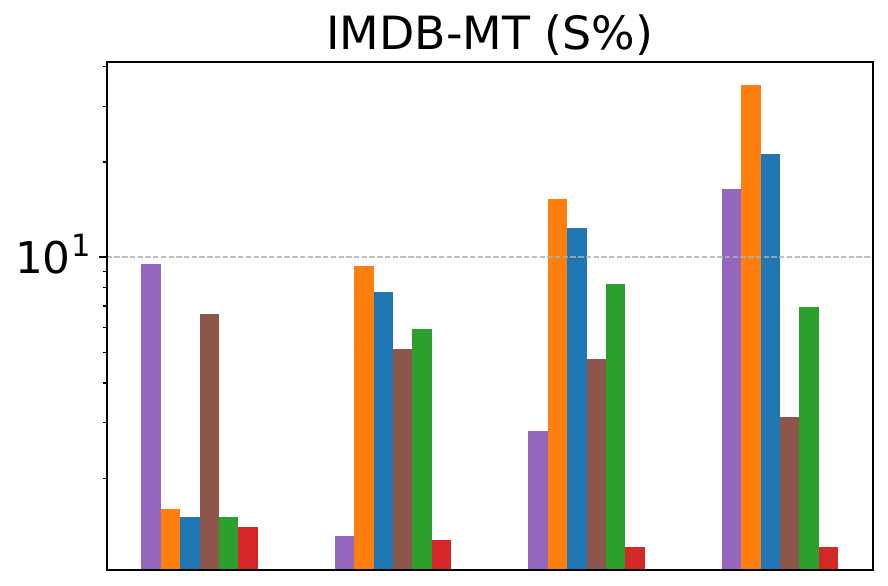}
    \label{fig:imdb_mt_diff_range_prefix}
  \end{subfigure}\\[-0.5mm]
  
  \begin{subfigure}[b]{0.25\textwidth}
    \centering
    \includegraphics[width=\textwidth]{ 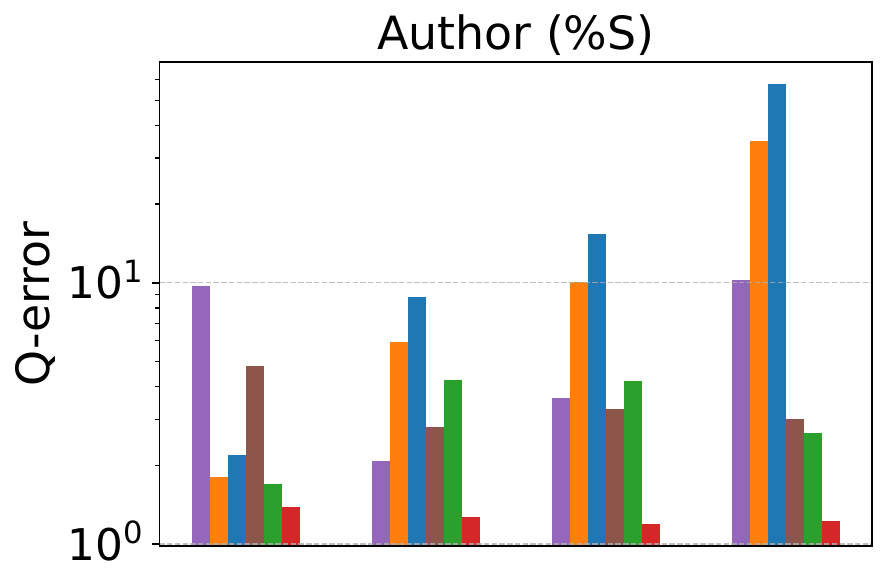}
    \label{fig:author_diff_range_suffix}
  \end{subfigure}%
  \hfill
  \begin{subfigure}[b]{0.24\textwidth}
    \centering
    \includegraphics[width=\textwidth]{ 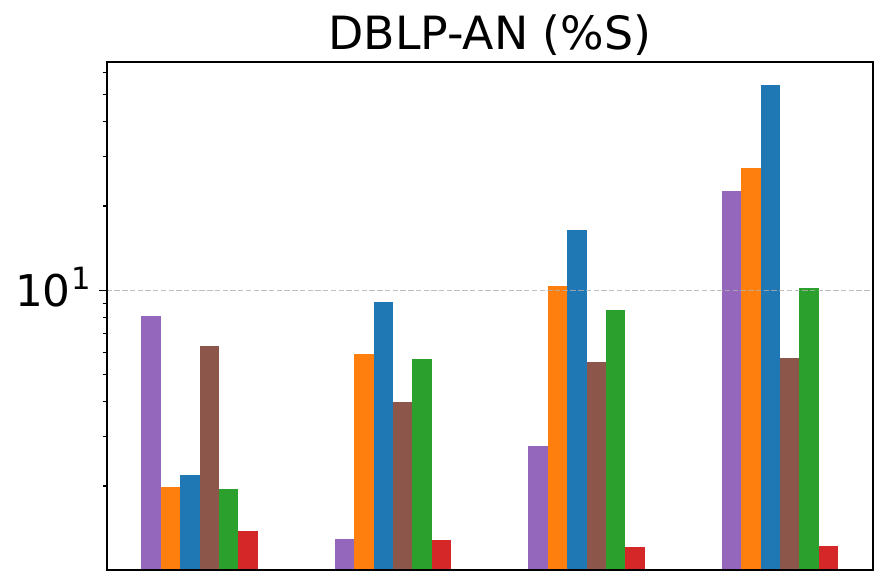}
    \label{fig:dblp_an_diff_range_suffix}
  \end{subfigure}%
  \hfill
  \begin{subfigure}[b]{0.24\textwidth}
    \centering
    \includegraphics[width=\textwidth]{ 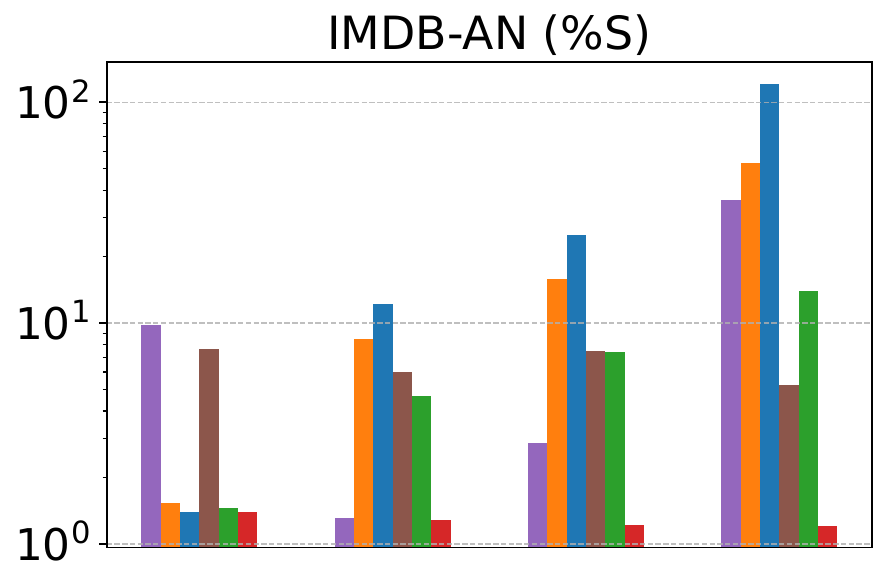}
    \label{fig:imdb_an_diff_range_suffix}
  \end{subfigure}%
  \hfill
  \begin{subfigure}[b]{0.24\textwidth}
    \centering
    \includegraphics[width=\textwidth]{ 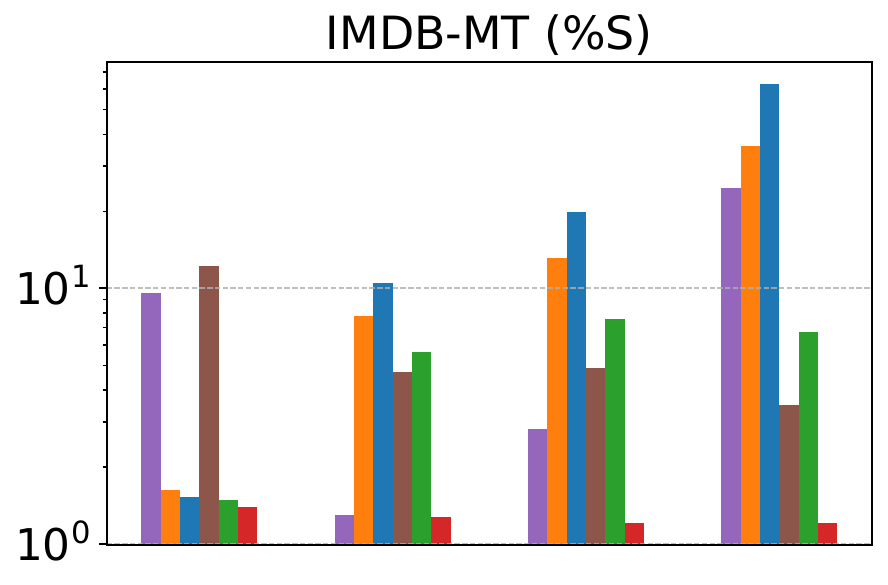}
    \label{fig:imdb_mt_diff_range_suffix}
  \end{subfigure}\\[-3mm]
  \begin{subfigure}[b]{0.25\textwidth}
    \centering
    \includegraphics[width=\textwidth]{ 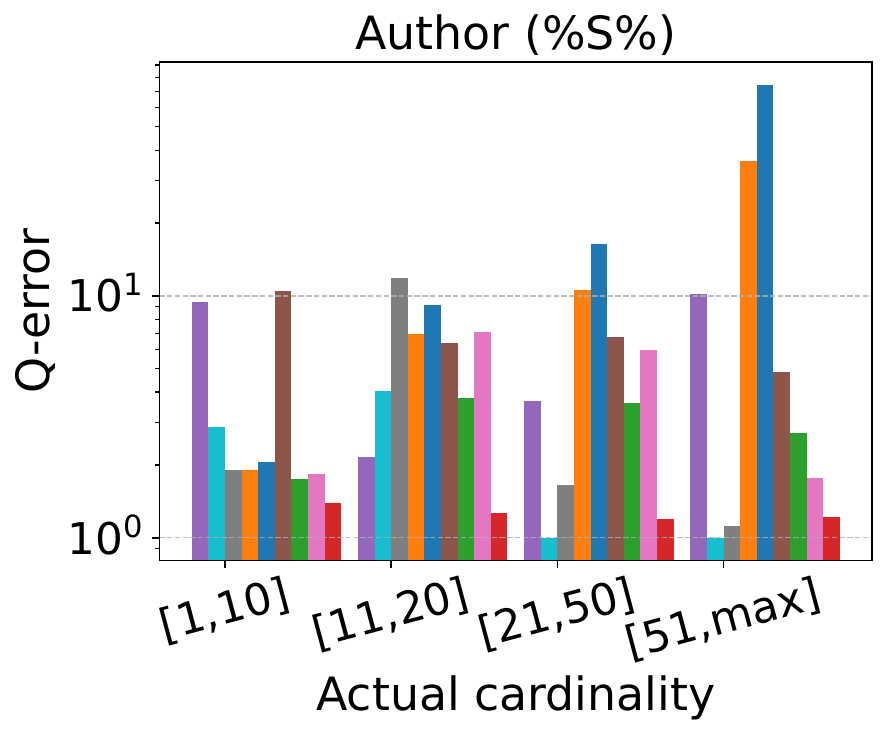}
    \label{fig:author_diff_range_sub}
  \end{subfigure}%
  \hfill
  \begin{subfigure}[b]{0.24\textwidth}
    \centering
    \includegraphics[width=\textwidth]{ 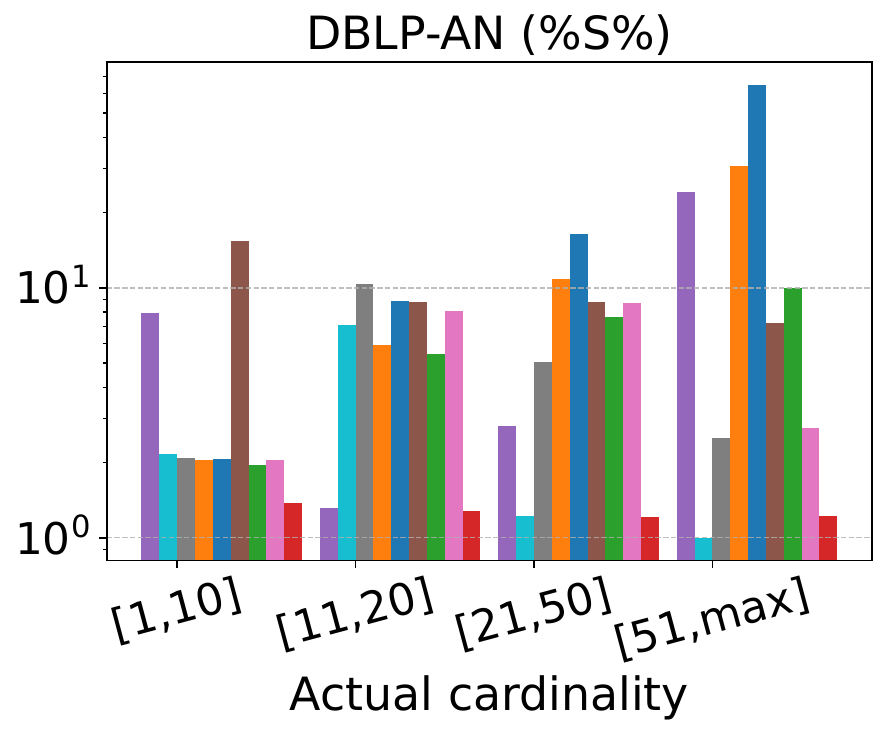}
    \label{fig:dblp_an_diff_range_sub}
  \end{subfigure}%
  \hfill
  \begin{subfigure}[b]{0.24\textwidth}
    \centering
    \includegraphics[width=\textwidth]{ 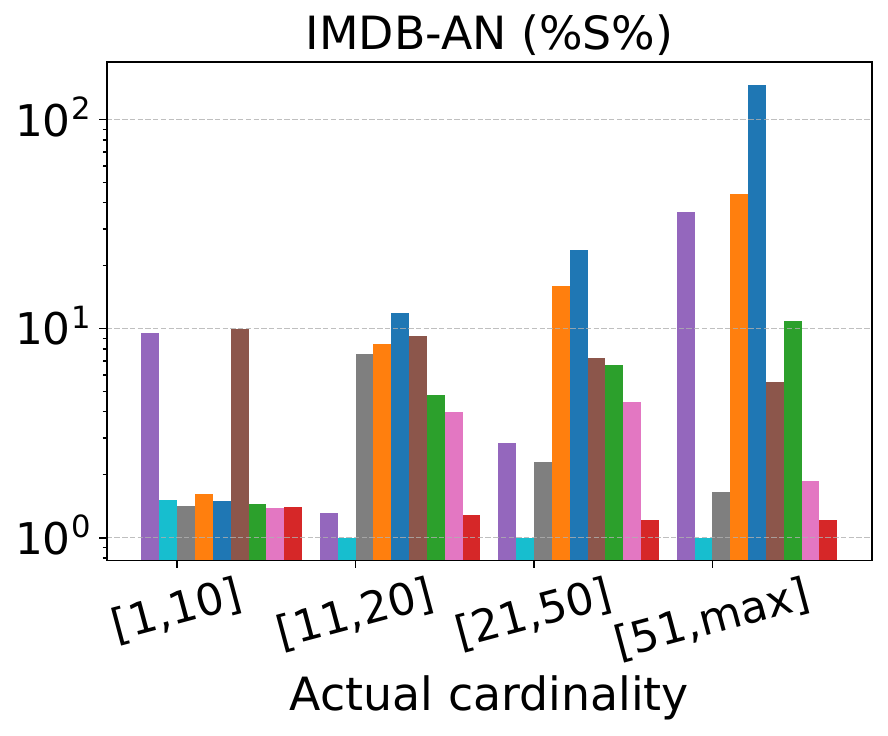}
    \label{fig:imdb_an_diff_range_sub}
  \end{subfigure}%
  \hfill
  \begin{subfigure}[b]{0.24\textwidth}
    \centering
    \includegraphics[width=\textwidth]{ 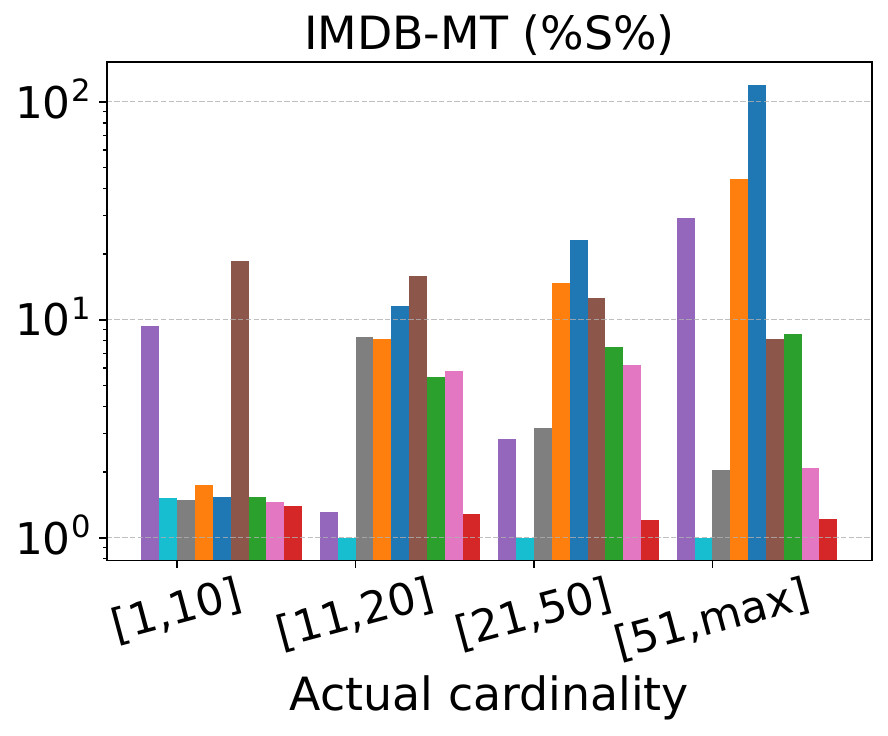}
    \label{fig:imdb_mt_diff_range_sub}
  \end{subfigure}

    \vspace{-7mm}
  \caption{Q-error of all estimators varying actual result cardinality.}
  \label{fig:diff_range}
  \vspace{-4mm}
\end{figure*}
\noindent\textbf{Robustness on Different Cardinality Ranges}. Figure~\ref{fig:diff_range} reports mean Q-error of each estimator based on actual cardinality range across all datasets and query patterns. We observe the following: (1) Although most LIKE queries have small cardinalities, a non-negligible fraction exhibit moderate or large cardinalities. Existing methods perform poorly on these queries, often incurring extremely large Q-errors that can severely distort optimizer decisions. (2) Our method almost maintains the lowest Q-error across all cardinality ranges and datasets, showing strong robustness. It achieves at least 2x lower Q-error that others (except for MO and PostgreSQL ) when the true cardinality is larger than 10. (3) PostgreSQL (PG) achieves a low Q-error in the range $[11,20]$. This is because it usually predicts the cardinality with $11$ in our testing. (4) Learning-based methods, e.g., E2E and CLIQUE, tend to exhibit increasing Q-error as the true cardinality grows, indicating degraded estimation accuracy on large-result queries. These models struggle to generalize to high-cardinality patterns, likely due to skewed training distributions and overfitting to small range cases.

\vspace{-4mm}
\subsection{Estimation of Empty-answer Queries}\label{exp:emtpy}
We report the percentages of queries with estimated cardinality smaller than $2$. For \BF{}, we compare: \BF{}-\textsf{N} -- the naive method in Sec.~\ref{sec:neg_naive}, and \BF{}-\textsf{PW} -- prefix-walk strategy in Sec.~\ref{sec:enhence-neg}. From Table~\ref{tab:neg-acc}, we can see: (1) In all datasets, \BF{}-\textsf{PW} significantly outperforms \BF{}-\textsf{N}, which aligns with our theoretic analysis in Sec.~\ref{sec:enhence-neg}. 
(2) LEARNT-PW is the best in $S\%$ and $\%S$ while is competitive with LBS, MO, and SSCard
in $\%S\%$.
(3) Only \BF{}-\textsf{PW} has consistently high performance across datasets. In contrast, other methods exhibit fluctuations. For example, CLIQUE achieves 95.09 on Author while only 84.20 on DBLP-AN.

\begin{table}[ht]
\centering
\vspace{-2mm}
\caption{Probability (Prob.) of estimates below 2}
\vspace{-3mm}
\resizebox{0.50\textwidth}{!}{
\begin{tabular}{|c|c|c|c|c|c|}
\hline
{\textbf{Query}} & {\textbf{Method}} 
& {\textbf{Author}} 
& {\textbf{DBLP-AN}} 
& {\textbf{IMDB-AN}}
& {\textbf{IMDB-MT}} \\\hline
\multirow{6}{*}{ S\%} 
& PostgreSQL & 0 & 0  & 0 & 0  \\
& \BF{}-\textsf{N}     & 93.98  & 88.37  & 97.47 & 95.90 \\
& \BF{}-\textsf{PW}     & 99.47  & 98.55  & 99.86 & 99.74 \\\cline{2-6}
& E2E         &   4.87    &    4.19     &  26.99    &   0.02          \\
& Astrid      &    92.16   &   78.30     & 97.99  & 95.36  \\
& LPLM        & 22.98  & 22.76  & 31.78 & 53.67 \\
& CLIQUE      &  96.09    &   92.48      &   99.50   &    99.44       \\
\hline\hline
\multirow{6}{*}{ \%S} 
& PostgreSQL & 0 & 0  & 0 & 0  \\
& \BF{}-\textsf{N}     & 93.48  & 88.35  & 97.15 & 96.23  \\
& \BF{}-\textsf{PW}     & 99.44  & 98.58  & 99.82 & 99.76 \\\cline{2-6}
& E2E         &   3.12    &    2.32     &  29.12    &   0.02  \\
& Astrid      &   90.33    &  76.37      & 99.29 &  98.08  \\
& LPLM        & 24.01  & 11.04  & 17.21 & 43.21  \\
& CLIQUE      &  95.09     &   84.20       &   98.83  &  99.35           \\

\hline\hline
\multirow{10}{*}{ \%S\% } 
& PostgreSQL & 0 & 0  & 0 & 0  \\
& MO & 99.21 & 99.43  & 99.78 & 99.80 \\
& LBS & 99.99 & 99.97  & 99.97 & 99.97 \\
& \BF{}-\textsf{N}     & 90.31  & 85.40  &95.62 & 95.17\\
& \BF{}-\textsf{PW}     & 99.10  & 98.15  &99.73 & 99.68 \\\cline{2-6}
& E2E         &   15.04    &   6.43      &    27.69          &   0.02        \\
& Astrid      &   89.16    &    86.42     & 99.43  & 98.69   \\
& LPLM        & 11.53  & 6.23  & 12.66 & 20.88 \\
& CLIQUE      &  95.44     &   92.82       &   99.42   &    99.49          \\
& SSCard & 99.92 & 99.88  & 99.75 & 99.90  \\
\hline
\end{tabular}\label{tab:neg-acc}
}
\vspace{-1mm}
\end{table}

\vspace{-3mm}
\subsection{Overhead}
\noindent\textbf{Preprocessing Time}. Table~\ref{tab:prep_time} shows preprocessing time of each estimator before supporting online inference. \BF{} achieves the lowest offline construction time across all datasets by a large margin. Compared to E2E, which takes up to around 43,000 seconds, \BF{} completes in just 17–77 seconds, yielding over 500x speedup, demonstrating that \BF{} is highly efficient for offline preprocessing and well-suited for practical deployment.

\begin{table}[ht]
\centering
\vspace{-1mm}
\caption{Preprocessing time (s) comparison across datasets}
\vspace{-2mm}
\begin{tabular}{|c|c|c|c|c|}
\hline
\textbf{Method} & \textbf{Author} & \textbf{DBLP-AN} & \textbf{IMDB-AN} & \textbf{IMDB-MT} \\\hline
MO   & 40.27  & 133.98  & 267.26  & 486.56 \\\hline

LBS   & 219.92 & 923.92  & 997.13  & 974.31 \\\hline
\BF{}    & 17.13  & 44.11  & 76.78  & 71.33 \\\hline\hline
E2E     & 26970  & 36896  & 43791  & 41080 \\\hline
Astrid  &   22519     &    38143    &  56607      &   51696    \\\hline
LPLM    & 9419   & 49767   & 65376   & 45457  \\\hline
CLIQUE  & 6704   & 9935   & 10879  & 11815 \\\hline
SSCard   & 27.19  & 141.49  & 117.31  & 102.53 \\\hline
\end{tabular}\label{tab:prep_time}
\end{table}

\noindent\textbf{Inference Time}. 
As shown in Table~\ref{tab:infer_time}, \BF{} achieves low inference latency, significantly outperforming most learning-based approaches. Specifically, it is over 80× faster than E2E, and substantially faster than Astrid, LPLM, and CLIQUE across all datasets. Compared to non-learning-based methods such as MO and LBS, \BF{} incurs only a slight overhead (e.g., 0.05ms vs. 0.01–0.12ms), while still maintaining competitive latency.

\noindent\textbf{Storage Usage}. As shown in Table~\ref{tab:storage}, \BF{} achieves the lowest storage usage among all non-learning-based methods, demonstrating strong efficiency advantages over traditional techniques like MO and LBS. Compared to learning-based methods (Astrid, E2E, CLIQUE, LPLM, and SSCard), \BF{} uses less storage than three of them (E2E, SSCard, and Astrid) and is competitive with the most compact ones (CLIQUE and LPLM). 

\subsection{Effectiveness of Long Queries Support}
\tocheck{Table~\ref{tab:long_query} shows mean Q-error of \BF{} and estimation time on the queries with the length between 11 and 20. We can observe that \BF{} achieves consistently low Q-error (around 1.3–1.4) on long queries across all datasets and query types, while keeping estimation latency below 1 ms. This shows that the Markov-based extension provides accurate and efficient support.}

\begin{table}[ht]
\vspace{-2mm}
\centering
\caption{Inference time (ms) comparison across datasets}
\vspace{-2mm}
\begin{tabular}{|c|c|c|c|c|}
\hline
\textbf{Method} & \textbf{Author} & \textbf{DBLP-AN} & \textbf{IMDB-AN} & \textbf{IMDB-MT} \\
\hline
PostgreSQL & 0.09  & 0.09  & 0.09  & 0.09 \\\hline
MO & 0.02  & 0.01  & 0.02  & 0.02 \\\hline
LBS & 0.09 & 0.12  & 0.12  & 0.11 \\\hline
\BF{}       & 0.04  & 0.05  & 0.04  & 0.04 \\\hline\hline
E2E        & 6.90  & 6.90  & 6.90  & 7.10 \\\hline
Astrid     &  3.27     &  3.18     &  3.26     &   3.27   \\\hline
LPLM       & 1.30  & 1.00  & 1.00  & 1.30 \\\hline
CLIQUE     & 3.65  & 3.90  & 4.00  & 3.70 \\\hline
SSCard & 0.02  & 0.03  & 0.03  & 0.03 \\\hline
\end{tabular}\label{tab:infer_time}
\end{table}

\begin{table}[ht]
\vspace{-6mm}
\centering
\caption{Storage (MB) comparison across datasets}
\vspace{-2mm}
\begin{tabular}{|c|c|c|c|c|}
\hline
\textbf{Method} & \textbf{Author} & \textbf{DBLP-AN} & \textbf{IMDB-AN} & \textbf{IMDB-MT} \\
\hline
MO    & 2.45  & 7.76  & 13.80  & 24.64 \\\hline
LBS    & 8.64  & 15.73  & 15.68  & 16.42 \\\hline
\BF{}    & 0.51  & 1.77  & 1.73  & 1.76 \\\hline\hline
E2E     & 20.84 & 6.52  & 6.41  & 6.37 \\\hline
Astrid  &   2.64    &  2.04     &  2.04     &  2.04    \\\hline
LPLM    & 1.09  & 1.02  & 1.02  & 1.05 \\\hline
CLIQUE  & 2.21  & 1.04  & 1.03  & 1.06 \\\hline
SSCard    & 1.65  & 4.39  & 4.29  & 4.29 \\\hline
\end{tabular}\label{tab:storage}
\vspace{-4mm}
\end{table}






\begin{table}[ht]
\centering
\vspace{-2mm}
\caption{Q-error and estimation time (ms) on long queries}
\label{tab:long_query}
\vspace{-2mm}
\begin{tabular}{|c|c|c|c|c|c|c|c|c|}
\hline
\textbf{Query} 
& \multicolumn{2}{c|}{\textbf{Author}} 
& \multicolumn{2}{c|}{\textbf{DBLP-AN}} 
& \multicolumn{2}{c|}{\textbf{IMDB-AN}} 
& \multicolumn{2}{c|}{\textbf{IMDB-MT}} \\\hline
$S\%$   & 1.40 & 0.34 & 1.42 & 0.42 & 1.33 & 0.31 & 1.28 & 0.47 \\\hline
$\%S$   & 1.39 & 0.31 & 1.42 & 0.38 & 1.30 & 0.37 & 1.27 & 0.46 \\\hline
$\%S\%$ & 1.39 & 0.30 & 1.42 & 0.38 & 1.31 & 0.32 & 1.28 & 0.43 \\\hline
\end{tabular}
\vspace{-6mm}
\end{table}

\subsection{Impact on Query Optimization}
\noindent{\revision{\textbf{Settings.} We evaluate 57 JOB queries containing prefix, suffix, and substring \texttt{LIKE} predicates. For fairness, estimators are constructed 
for all columns involving \texttt{LIKE}. We compare 
PostgreSQL, CLIQUE~\cite{CLIQUE}, SSCard~\cite{SSCard}, and \BF{} on PostgreSQL v13.1. 
Parallel execution is disabled and a 1000s timeout is enforced. 
Each query is executed three times per method.}}

\begin{table}[!ht]
\centering
\small
\vspace{-2mm}
\caption{\revision{Execution time comparison on JOB benchmark}}
\vspace{-2mm}
\label{tab:execution}
\begin{tabular}{|l|cc|cc|c|}
\hline
\multirow{2}{*}{\textbf{Method}} 
& \multicolumn{2}{c|}{\textbf{Improved}} 
& \multicolumn{2}{c|}{\textbf{Degradation}} 
& \multirow{2}{*}{\textbf{\shortstack{Total \\ Runtime (s)}}} \\
\cline{2-5}
& \#Q & Avg. Ratio (\%) 
& \#Q & Avg. Ratio (\%) 
&  \\
\hline
PostgreSQL      & -- & --   & -- & --    & 1205.79 \\\hline
SSCard  & 8  & 52.6 & 49 & 17.4  & 1182.17 \\\hline
CLIQUE  & 9  & 36.3 & 48 & 2254.6 & 3192.38 \\\hline
\BF{}  & 30 & 16.5 & 27 & 4.0   & 1106.99 \\
\hline
\end{tabular}
\vspace{-3mm}
\end{table}

\noindent\revision{\textbf{Result.} Table~\ref{tab:execution} reports the number of queries with performance improvement/degradation compared to PostgreSQL, the average improvement and degradation ratios, and the total runtime. We observe: (1) {\BF{} achieves the lowest total runtime}, showing that its accuracy improvements translate into execution gains. (2) \BF{} improves the largest number of queries, indicating 
more consistent plan improvements. (3) Although \BF{} has 27 degradations, the {average slowdown is only 4\%}, much smaller than others whose severe degradation offset its improvements and inflate total runtime.}

\vspace{-2mm}
\subsection{Effectiveness of Our Design}

\noindent\textbf{Performance on Different $eb$ Settings}. Table~\ref{tab:eb-eval} summarizes the trade-offs of \BF{} under different $eb$s.  We test on DBLP-AN with $\%S\%$ ad other datasets and query patterns have similar trend. As $eb$ increases from 1.3 to 2.5, the storage usage, build time, and inference time significantly decrease (e.g., storage drops from 2.09MB to 0.53MB), indicating better space and runtime efficiency. However, this comes at the cost of higher mean Q-error, which rises from 1.09 to 2.01. This highlights that \BF{} provides a tunable balance between accuracy and efficiency, with smaller $eb$ favoring estimation quality, and larger $eb$ reducing resource consumption.
\begin{table}[!ht]
	\small
	\vspace{-2mm}
	\centering
	\caption{Analysis of \BF{} with different $eb$s.}
	\vspace{-4mm} 
	\resizebox{0.48\textwidth}{!}{%
		\begin{tabular}{|c|c|c|c|c|c|}
			\hline
			\textbf{$eb$} & 1.3 & 1.5 & 1.7 & 2.0 & 2.5 \\\hline
                Building time (s) &42.09&34.15&34.78&31.69&27.76\\\hline
                Storage (MB)       &2.09&1.36&1.27&0.76&0.53\\\hline
                mean Q-error  &1.09&1.37&1.4&1.71&2.01\\\hline
                infer. time (ms)  &0.12&0.07&0.07&0.04&0.03\\\hline
        \end{tabular}%
	}
	\label{tab:eb-eval}
    \vspace{-2mm}
\end{table}

\noindent\textbf{Performance on Different $p_n$ Settings and Impact of Upper Tree}. Table~\ref{tab:upper-tree} shows the impact of enabling the upper tree and tuning $p_n$ ( the probability of
classifying empty-answer query with $B_1$) on storage and empirical probability. 'Unset' indicates that no constraint is imposed on empty-answer query classification probability. We test on DBLP-AN with
$\%S\%$ and other datasets and query patterns have similar trend. Across all $p_n$ settings, using the upper tree consistently reduces storage usage while maintaining comparable or slightly better accuracy.

\begin{table}[ht]
\vspace{-2mm}
\centering
\caption{Effect of upper tree and $p_n$ thresholds}
\vspace{-2mm}
\resizebox{0.48\textwidth}{!}{%
\begin{tabular}{|c|c|c|c|c|c|c|c|c|}
\hline
 \multirow{2}{*}{\textbf{Variants}}& \multicolumn{2}{c|}{\textbf{Unset}} & \multicolumn{2}{c|}{\textbf{99.9\%}} & \multicolumn{2}{c|}{\textbf{99.99\%}} & \multicolumn{2}{c|}{\textbf{99.999\%}} \\\cline{2-9}
 & Stor. & Prob. & Stor. & Prob. & Stor. & Prob. & Stor. & Prob. \\
\hline
\textbf{w/o Tree} & 1.36 & 96.11\% & 2.43 & 99.9\% & 2.88 & 99.97\% & 3.34 & 99.993\% \\\hline
\textbf{w/ Tree}  & 1.36 & 98.15\% & 1.92 & 99.8\% & 2.33 & 99.93\% & 2.77 & 99.979\% \\
\hline
\end{tabular}\label{tab:upper-tree}
}
\vspace{-2mm}
\end{table}

\noindent\textbf{Impact of Frontier-based Pruning}.
Table~\ref{tab:ft-opt} shows that applying frontier-based optimization significantly reduces both storage and build time across all datasets. On average, storage drops by 20–30\% (e.g., from 2.00MB to 1.48MB on IMDB-MT), and build time is reduced by up to 40\% (e.g., from 103.07s to 61.01s on IMDB-MT). 

\begin{table}[ht]
\vspace{-2mm}
\centering
\caption{Impact of frontier-based pruning}
\vspace{-2mm}
\resizebox{0.48\textwidth}{!}{%
\begin{tabular}{|c|c|c|c|c|c|c|c|c|}
\hline
 \multirow{2}{*}{\textbf{Variants}}& \multicolumn{2}{c|}{\textbf{Author}} & \multicolumn{2}{c|}{\textbf{DBLP-AN}} & \multicolumn{2}{c|}{\textbf{IMDB-AN}} & \multicolumn{2}{c|}{\textbf{IMDB-MT}} \\\cline{2-9}
 & Stor. & Time & Stor. & Time & Stor. & Time & Stor. & Time \\
\hline
\textbf{w/o F} & 0.54 & 21.49 & 1.80 & 52.58 & 1.79 & 97.68 & 2.00 & 103.07 \\\hline
\textbf{w/ F}  & 0.40 & 13.16 & 1.36 & 34.15 & 1.33 & 60.23 & 1.48 & 61.01 \\
\hline
\end{tabular}\label{tab:ft-opt}
}
\vspace{-2mm}
\end{table}

\section{Conclusion}
In this paper, we study the problem of cardinality estimation for LIKE queries. We introduce \BF{}, a novel classification-based estimator that transforms cardinality estimation into a tunable, bucket-based classification problem. \BF{} combines a bucketed layered filter architecture with theoretical guarantees for valid queries and effective support for negative queries. It further incorporates a parameter selection framework that minimizes storage under accuracy constraints.
Our extensive experiments across real-world datasets show that \BF{} achieves consistently lower mean and tail Q-errors than state-of-the-art methods such as CLIQUE and LPLM, while also reducing preparation time by up to 70×.



\bibliographystyle{ACM-Reference-Format}
\bibliography{sample}

\end{document}
\endinput